\def\eb{{\bf e}}
\def\B{{\cal B}}
\def\E{{\cal E}}
\def\N{{\cal N}}
\def\V{{\cal V}}
\def\Fbb{{\mathbb F}}
\def\R{{\mathbb R}}
\def\al{\alpha}
\def\d{\delta}
\def\D{\Delta}
\def\e{\epsilon}
\def\g{\gamma}
\def\GA{\Gamma}
\def\r{\rho}
\def\s{\sigma}
\def\SI{\Sigma}
\def\t{\tau}
\def\th{\theta}
\def\ap{\rightarrow}
\def\seq{\subseteq}
\def\bi{\{0,1\}}
\def\fa{\; \forall}
\def\st{\mbox{ s.t. }}
\def\nm{\Vert}
\renewcommand{\and}{\mbox{$\wedge$}}
\newcommand{\bc}{\begin{center}}
\newcommand{\ec}{\end{center}}
\newcommand{\be}{\begin{equation}}
\newcommand{\ee}{\end{equation}}
\newcommand{\bd}{\begin{displaymath}}
\newcommand{\ed}{\end{displaymath}}
\newcommand{\ba}{\begin{array}}
\newcommand{\ea}{\end{array}}
\newcommand{\ben}{\begin{enumerate}}
\newcommand{\een}{\end{enumerate}}
\newcommand{\bit}{\begin{itemize}}
\newcommand{\eit}{\end{itemize}}
\newcommand{\beq}{\begin{eqnarray}}
\newcommand{\eeq}{\end{eqnarray}}
\newcommand{\btab}{\begin{tabular}}
\newcommand{\etab}{\end{tabular}}
\newcommand{\bfig}{\begin{figure}}
\newcommand{\efig}{\end{figure}}
\newcommand{\btp}{\begin{tikzpicture}}
\newcommand{\etp}{\end{tikzpicture}}
\newcommand{\argmin}{\operatornamewithlimits{argmin}}
\newcommand{\nmm}[1]{ \nm #1 \nm }
\newcommand{\nmeu}[1]{ \nm #1 \nm_2 }
\newcommand{\nmeusq}[1]{ \nm #1 \nm_2^2 }
\newcommand{\nmp}[1]{ \nm #1 \nm_p }
\newcommand{\IP}[2]{ \langle #1 , #2 \rangle }
\newcommand{\supp}{\mbox{supp}}
\def\xh{\hat{x}}
\def\nmsl1{\nm_{{\rm SL1}}}
\def\ct{\tilde{c}}
\def\Dcal{{\cal D}}
\def\oneb{\mathbf{1}}
\def\R{{\mathbb R}}
\def\yb{\bar{y}}
\newtheorem{definition}{Definition}{\bf}{\it}
{\bf}{\rm}
{\bf}{\it}
\newtheorem{theorem}{Theorem}{\bf}{\it}
{\bf}{\it}
{\bf}{\it}
{\bf}{\rm}
\begin{document}


\title{
A Fast Noniterative Algorithm for \\
Compressive Sensing Using \\
Binary Measurement Matrices
}

\author{Mahsa~Lotfi and
Mathukumalli~Vidyasagar
\thanks{The authors are with
the Erik Jonsson School of Engineering and Computer Science,
The University of Texas at Dallas, Richardson, TX 75080, USA.
This research was supported by the National Science Foundation, USA under
Award \#ECCS-1306630. 
}
}

\maketitle

\begin{abstract}

In this paper we present a new algorithm for compressive sensing that
makes use of \textit{binary} measurement matrices and achieves
\textit{exact} recovery of ultra sparse vectors, in a single pass and
without any iterations.
Due to its noniterative nature,
our algorithm is hundreds of times faster than
$\ell_1$-norm minimization, and
methods based on expander graphs, both of which require multiple iterations.
Our algorithm can accommodate nearly sparse vectors, in which
case it recovers index set of the largest components, and can also
accommodate burst noise measurements.
Compared to compressive sensing methods that are guaranteed to achieve
exact recovery of \textit{all} sparse vectors,
our method requires fewer measurements.
However, methods that achieve \textit{statistical} recovery,
that is, recovery of \textit{almost} all but not all sparse vectors,
can require fewer measurements than our method.


\end{abstract}


{\bfseries Keywords:} Compressive Sensing, Ultra Sparse Vector Recovery, Deterministic Methods,
Expander Graphs, Restricted Isometry Property, Basis Pursuit


\section{Introduction}\label{sec:intro}

\subsection{Definition of Compressive Sensing}\label{ssec:11}

Compressive sensing refers to the recovery of 
high-dimensional but low-complexity objects 
using a very small number of measurements.
Some examples are the recovery of high-dimensional vectors that are
sparse (with very few nonzero components) or nearly sparse,
or high-dimensional matrices of low rank.
The focus in this paper is on vector recovery.
In order to make the discussion precise,
we give an exact formulation of what ``compressive
sensing'' means.
Our terminology more or less follows that in \cite{FR13,Cohen-Dahmen-Devore09}.
Throughout, $n$ denotes the dimension of the unknown vector, and $[n]$
denotes the set $\{ 1 , \ldots , n \}$.
The symbol $\supp(x) \seq [n]$ denotes the ``support'' of a vector $x \in \R^n$;
that is
\bd
\supp(x) := \{ i \in [n] : x_i \neq 0 \} .
\ed
If $k < n$ is a specified integer, then $\SI_k \seq \R^n$ denotes the set of
\textbf{$k$-sparse} vectors, that is
\bd
\SI_k := \{ x \in \R^n : | \supp(x) | \leq k \} .
\ed
%
Suppose $\nmm{\cdot}$ is some specified norm on $\R^n$, and $k < n$
is a specified integer.
Then the \textbf{sparsity index} $\s_k(x,\nmm{\cdot})$ is defined by
\be\label{eq:11}
\s_k(x,\nmm{\cdot}) := \min_{z \in \SI_k} \nmm{x - z} .
\ee
For a given $x \in \R^n$ and an integer $k < n$, the symbols $x_d \in \R^n$
and $x_r \in \R^n$ denote respectively the \textbf{dominant} part 
and the \textbf{residual} part of $x$.
Thus $x_d$ is the vector consisting
of the $k$ largest components by magnitude of $x$ with the remaining components
set equal to zero, and $x_r = x - x_d$.
Note that, strictly speaking, we should write $x_{d,k}$ because the dominant
part depends on the specified integer $k$, but we do not do this in
the interests of less cluttered notation.
It is obvious that, for any $p \in [1,\infty]$, we have that
$\s_k(x,\nmp{\cdot}) = \nmp{x_r}$.
Also, in case of ``ties,'' when two or more components of $x$ have
the same magnitude, the symbol $x_d$ can be defined in any consistent fashion.

Following the notation in \cite{Cohen-Dahmen-Devore09}, we view compressive
sensing as consisting of two maps: A \textbf{measurement matrix}
$A \in \R^{mn}$ where $m < n$ is the number of measurements, and a
\textbf{decoding map} $\D: \R^m \ap \R^n$.

\begin{definition}\label{def:CS}
A pair $(A,\D)$ is said to achieve \textbf{exact sparse recovery} of order $k$
if
\be\label{eq:12}
\D(Ax) = x , \fa x \in \SI_k .
\ee
A pair $(A,\D)$ is said to achieve \textbf{stable sparse recovery} of order $k$
if there exists a constant $C$ such that
\be\label{eq:13}
\nmeu{ \D(Ax) - x } \leq C \s_k(x,\nmm{\cdot}_1) , \fa x \in \R^n .
\ee
A pair $(A,\D)$ is said to achieve \textbf{robust sparse recovery} of order $k$
if there exist constants $C, D$ such that, whenever $\eta \in \R^m$
satisfies $\nmeu{\eta} \leq \e$, we have that
\be\label{eq:14}
\nmeu{ \D(Ax + \eta) - x} \leq C \s_k(x,\nmm{\cdot}_1) + D \e ,
\fa x \in \R^n .
\ee
\end{definition}

It is easy to verify that robust recovery implies stable recovery, which
in turn implies exact recovery.

Note that in Definition \ref{def:CS}, the various inequalities are
required to hold for \textit{all} vectors $x$.
For want of a better phrase, this could be thought of as ``guaranteed''
sparse recovery.
An alternate and weaker requirement would be ``statistical'' sparse
recovery, in which the pair $(A,\D)$ recovers almost all,
but not necessarily all, 
sparse vectors, with respect to a predefined probability measure on the set of
vectors.
In such a case the number of measurements $m$ can be significantly reduced.
We will describe this alternative in greater detail in Section 
\ref{ssec:23}.

\subsection{Overview and Our Contributions}\label{ssec:12}

By now there are several approaches to the recovery of sparse vectors.
If $x \in \R^n$ is an unknown sparse vector and $y = Ax \in \R^n$ is
the measured vector, then the most ``logical'' approach to finding $x$
would be to solve
\bd
\xh = \argmin_{z} \nmm{z}_0 \st Az = y ,
\ed
where $\nmm{z}_0 = | \supp (z) |$ denotes the number of nonzero
components of $z$.
Unfortunately this problem is NP-hard, as shown in \cite{Natarajan95}.
Therefore one can think of replacing the function $\nmm{\cdot}_0$ by
its ``convex envelope,'' which is the largest convex function that is
dominated by $\nmm{\cdot}_0$.
Using methods similar to those in \cite{Fazel-Hindi-Boyd01},
it can be shown that the convex envelope of $\nmm{\cdot}_0$ 
over the unit ball in $\nmm{\cdot}_\infty$ is $\nmm{\cdot}_1$.
Consequently the most popular algorithm is $\ell_1$-norm minimization
with a measurement matrix $A$ chosen to satisfy the so-called
restricted isometry property (RIP), which is defined precisely in
Section \ref{ssec:21}.
The methods used for choosing the matrix $A$ so as to satisfy the RIP
can either be deterministic or probabilistic.
Probabilistic methods lead to $A$ matrices that do not have any structure;
moreover, verifying whether such a probabilistically generated
matrix satisfies the RIP condition
is NP-hard \cite{BDMS13}.
In contrast, deterministically constructed matrices
are often binary and thus readily
verified to have the required properties, easy to implement, and
faster than using random matrices.
However, even with binary measurement matrices,
convex optimization is far slower than greedy methods
such as matching pursuit \cite{Mallat-Zhang-MP}, orthogonal matching
pursuit \cite{PRK-OMP,Tropp-TIT04}, and CoSaMP
\cite{Needell-Tropp09}.
The main disadvantage of greedy methods is that the known sufficient
conditions for them to work are more stringent than those for convex
relaxation methods to work.
Another recent innovation, that combines the advantages of greedy
algorithms with weak sufficient conditions of convex optimization,
is based on the use of expander graphs \cite{Xu-Hassibi07,JXHC09}.
The measurement matrices in this approach are always binary and thus
easy to implement.
Further, the measurement matrices in this approach are expected to
satisfy an analog of the RIP condition, known as $\ell_1$-RIP.
There is reason to believe that, at least in principle, the number
of measurements in this approach can be order-optimal
\cite{Indyk-Ruzic08,Indyk-Allerton08b}.

The above discussion pertains to methods that lead to the recovery of
\textit{all} sufficiently sparse vectors.
By weakening the requirement, and asking only that the algorithm
be able to recover \textit{almost} all sufficiently sparse vectors,
the number of measurements drops drastically.
Among such methods, approximate message passing (AMP) introduced in
\cite{Donoho-et-al-PNAS09} is among the most thoroughly studied.
The analysis in \cite{Donoho-et-al-PNAS09} shows that the recovery
algorithm displays a \textit{phase transition} whereby the performance
of the algorithm undergoes an abrupt change.
A readable survey of these results is given in \cite{Donoho-Tanner10}.
In \cite{Wu-Verdu12}, a very general model is studied, wherein the
unknown vector obeys a known probability distribution, the encoder
is allowed to be nonlinear (as opposed to a linear map $A$), and the decoder
is Lipschitz-continuous and may make use of the known probability 
distribution of the unknown vector.
It is shown that, in the case of i.i.d. input processes, 
the phase-transition threshold for optimal encoding
is given in terms of the R\'{e}nyi 
information dimension of the input distribution.
More details are given in Section 
\ref{ssec:23}.
In \cite{Donoho-et-al-TIT13b}, the procedure in
\cite{Wu-Verdu12} is incorporated into the approximate message passing
(AMP) framework, along with the idea of spatial coupling taken from
\cite{Krzakala-et-al12},
and the phase transition properties of this algorithm
are studied.
In \cite{Amelunxen-et-al14}, a general theory of phase transitions
is developed for convex regularizers, which includes the widely used
technique of $\ell_1$-norm minimization, and a very sharp 
characterization of the phase transition boundary is given.
Thus there is a fairly complete theory for the case of ``statistical'' signal
recovery, which is only briefly touched upon here.

Against this backdrop, we now describe the contributions of the present
paper and place them in perspective.
In this paper we present a new \textit{noniterative} algorithm for
the recovery of vectors that are extremely sparse.
Due to its noniterative nature, it is hundreds of times faster than 
both $\ell_1$-norm minimization using binary measurement matrices,
and methods based on expander graphs.
Our algorithm works also for vectors that are nearly sparse but not 
exactly sparse, and/or measurements that are corrupted by noise.
Our method requires slightly fewer measurements than either the RIP condition or
the expander graph construction.
However, this claim should be tempered by the results based on 
approximate message passing and the R\'{e}nyi information dimension, which
show that if one is ready to settle for the recovery of \textit{almost all}
sparse vectors, the required number of measurements is far smaller than
those required by the RIP condition, and by inference, far smaller than
those required by our method.

\section{Literature Review}\label{sec:review}

\subsection{Compressive Sensing via $\ell_1$-Norm Minimization}\label{ssec:21}

One of the most popular approaches to compressive sensing is $\ell_1$-norm
minimization.
This approach is originally introduced in
\cite{Chen-Donoho-Saunders99,Chen-Donoho-Saunders01} as a heuristic
and is called ``basis pursuit.''
The method consists of defining the decoding map $\D$ as
\be\label{eq:15}
\D(y) = \xh := \argmin_z \nmm{z}_1 \st y = Az
\ee
in the case where $y = Ax$, and
\be\label{eq:15a}
\D(y) = \xh := \argmin_z \nmm{z}_1 \st \nmeu{ y - Az } \leq \e 
\ee
in the case where $y = Ax + \eta$ with $\nmeu{\eta} \leq \e$.
In several papers over the years, beginning with
\cite{Candes-Tao05,Donoho06b},
it is shown that $\ell_1$-norm minimization achieves robust sparse
recovery in the sense of Definition \ref{def:CS}
if the matrix $A$ satisfies suitable conditions.
At present, the two most popular
sufficient conditions for $\ell_1$-norm minimization to achieve robust
sparse recovery are the restricted isometry property (RIP), and
the robust null space property (RNSP).
It is shown in \cite{MV-Ranjan17} that RIP implies the RNSP, so
we restrict our attention to the RIP.


\begin{definition}\label{def:RIP}
A matrix $A \in \R^{mn}$ is said to satisfy the \textbf{restricted isometry
property (RIP)} of order $k$ with constant $\d$ if
\be\label{eq:16}
(1 - \d) \nmeusq{u} \leq \nmeusq{Au} \leq (1 + \d) \nmeusq{u} ,
\fa u \in \SI_k .
\ee
\end{definition}

Available results show that $\ell_1$-norm minimization achieves
robust sparse 
recovery provided the measurement matrix $A$ satisfies the RIP with
a sufficiently small constant.
The definitive results in this direction are derived in \cite{CZ14}.

\begin{theorem}\label{thm:Cai-Zhang-1}
(See Theorems 1.1 and 2.1 of \cite{CZ14}.)
Suppose that, for some number $t > 1$, the matrix $A$ satisfies the RIP
of order $\lceil tk \rceil$ with constant $\d < \sqrt{(t-1)/t}$.
Then $\ell_1$-norm minimization as in \eqref{eq:15a} achieves robust
sparse recovery.
\end{theorem}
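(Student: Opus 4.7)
My plan is to follow the general template for RIP-based recovery proofs and then import the sharper convex-combination argument of Cai and Zhang at the pivotal step. Let $\hat{x}$ be the minimizer in \eqref{eq:15a}, write $h := \hat{x} - x$, and let $\L_0 \seq [n]$ be the index set of the $k$ largest-magnitude components of $x$, so that $x_r = x_{\L_0^c}$ and $\nm x_r \nm_1 = \s_k(x,\nmm{\cdot}_1)$. The two standing inputs are a \emph{tube constraint} and a \emph{cone constraint}. The tube constraint follows from feasibility of $x$ in \eqref{eq:15a} together with the triangle inequality: $\nmeu{Ah} \leq \nmeu{A\hat{x}-y} + \nmeu{y-Ax} \leq 2\e$. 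The cone constraint follows from optimality $\nm \hat{x}\nm_1 \leq \nm x \nm_1$ by the standard split on $\L_0$ and $\L_0^c$, yielding $\nm h_{\L_0^c}\nm_1 \leq \nm h_{\L_0}\nm_1 + 2\nm x_r\nm_1$.

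Next I would order the coordinates of $h_{\L_0^c}$ by magnitude and partition $\L_0^c$ into consecutive blocks; the novelty compared to the classical Cand\`es argument is the block size. Rather than blocks of size $k$, I would take blocks of size $s := \lceil(t-1)k\rceil$, so that $\L_0$ together with the first block forms an index set of size at most $\lceil tk\rceil$ on which the RIP hypothesis applies directly. Standard monotonicity gives $\sum_{j\geq 2}\nmeu{h_{T_j}} \leq s^{-1/2}\nm h_{\L_0^c}\nm_1$, which, combined with the cone constraint, turns the problem into controlling $\nmeu{h_{T_0 \cup T_1}}$ in terms of $\nmeu{Ah}$ and $\nm x_r\nm_1$.

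The crux, and the step I expect to be the main obstacle, is to replace the classical ``$k+k$ split'' estimate by Cai and Zhang's sparse representation lemma: every vector whose $\ell_1$ norm is bounded by a fixed multiple of its $\ell_2$ norm admits an exact convex combination in terms of $\lceil tk\rceil$-sparse vectors of controlled $\ell_2$ norm. Applied to the tail portion of $h$, this exhibits $h$ as $h_{T_0\cup T_1} + \sum_i \l_i u_i$ with each $u_i$ being $\lceil tk\rceil$-sparse and $\nmeu{u_i}$ bounded by a function of $t$. Expanding $\IP{Ah_{T_0\cup T_1}}{A(h_{T_0\cup T_1} + \sum_i \l_i u_i)}$ and invoking RIP on every inner product $\IP{Au_i}{Ah_{T_0\cup T_1}}$ (each pair supported on at most $\lceil tk\rceil$ indices after a polarization identity) produces a quadratic inequality in $\nmeu{h_{T_0\cup T_1}}$ whose coefficients depend on $\d$ and $t$; the condition $\d < \sqrt{(t-1)/t}$ is precisely what makes the leading coefficient positive and gives a usable bound.

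Finally, I would close the argument by feeding the resulting estimate on $\nmeu{h_{T_0\cup T_1}}$ back into the tail bound $\sum_{j\geq 2}\nmeu{h_{T_j}} \leq s^{-1/2}(\nm h_{\L_0}\nm_1 + 2\nm x_r\nm_1)$, using $\nm h_{\L_0}\nm_1 \leq \sqrt{k}\nmeu{h_{T_0\cup T_1}}$, to deduce $\nmeu{h} \leq C\s_k(x,\nmm{\cdot}_1) + D\e$ with constants $C,D$ depending only on $t$ and $\d$. Exact recovery \eqref{eq:12} and stable recovery \eqref{eq:13} then follow by specializing to $\e = 0$ and $x \in \SI_k$ (resp.\ $\e = 0$ alone). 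The noiseless case can also be handled separately to obtain the cleaner constant in Theorem 1.1 of \cite{CZ14}; the noisy case corresponds to Theorem 2.1 there.
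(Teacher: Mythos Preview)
The paper does not prove this theorem; it is quoted verbatim as a literature result from \cite{CZ14} (Theorems 1.1 and 2.1 there) and no argument is given beyond the citation. So there is no ``paper's own proof'' to compare against --- the paper simply defers to Cai and Zhang.

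Your proposal is a reasonable outline of the Cai--Zhang argument that the paper is citing, so in spirit it matches what the paper intends the reader to consult. One technical slip worth flagging: the sparse-representation-of-a-polytope lemma in \cite{CZ14} is stated in terms of an $\ell_\infty$/$\ell_1$ relationship (if $\nmm{v}_\infty \leq \alpha$ and $\nmm{v}_1 \leq s\alpha$, then $v$ is a convex combination of $s$-sparse vectors with $\ell_\infty$ norm at most $\alpha$), not an $\ell_1$/$\ell_2$ relationship as you wrote. This matters for how the lemma is actually applied to the tail of $h$ --- one first bounds the $\ell_\infty$ norm of the tail blocks via the usual monotonicity trick, then invokes the lemma. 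Also, Cai and Zhang do not literally use a block partition of size $\lceil(t-1)k\rceil$ in the way you describe; the convex-combination lemma replaces the block decomposition rather than augmenting it. These are details of execution rather than a gap in strategy, and the overall route you sketch would close once the lemma is stated and applied correctly.
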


\begin{theorem}\label{Cai-Zhang-2}
(See \cite[Theorem 2.2]{CZ14}.)
Suppose $t \geq 4/3$.
Then for all $\xi > 0$ and all $k \geq 5/\xi$, there exists a matrix $A$
that satisfies the RIP of order $tk$ with constant
$\d_{tk} < \sqrt{(t-1)/t} + \xi$, and a vector $x \in \SI_k$ such that
\ben
\item With the noise-free measurement $y = Ax$, the decoder map
$\D$ defined in \eqref{eq:15} fails to recover $x$.
\item With a noisy measurement $y = Ax + \eta$ where $\nmeu{\eta} \leq \e$,
the decoder map $\D$ defined in \eqref{eq:15a} fails to recover $x$.
\een
\end{theorem}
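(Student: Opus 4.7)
The plan is to produce an explicit counterexample using the null-space characterization of $\ell_1$-recovery. Recall that basis pursuit \eqref{eq:15} fails to recover some $x \in \SI_k$ uniquely if and only if $\ker A$ contains a nonzero $h$ together with a set $T \seq [n]$ with $|T| \leq k$ such that $\nmm{h_T}_1 \geq \nmm{h_{T^c}}_1$; in that case, taking $x$ to be supported on $T$ with $x_i = -\gamma \, \sg(h_i)$ for $\gamma > \max_j |h_j|$ makes both $x$ and $x+h$ be $\ell_1$-minimizers of the same linear system, contradicting unique recovery. So the task reduces to constructing a matrix $A$ that carries such an $h$ in its kernel and simultaneously satisfies the RIP of order $s := \lceil tk \rceil$ with constant strictly below $\sqrt{(t-1)/t} + \xi$.

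I would take the null-space vector $h$ to have a two-block structure: fix disjoint $T, U \seq [n]$ of sizes $k$ and $s+1-k$, and set $h_i = 1$ on $T$, $h_i = -k/(s+1-k)$ on $U$, and $h_i = 0$ elsewhere. Then $\nmm{h_T}_1 = \nmm{h_U}_1 = k$, exactly saturating the null-space failure condition. The choice $|\supp h| = s+1$ is critical: any nonzero vector in $\ker A$ with support at most $s$ would force $\d_s(A) \geq 1$, defeating the RIP requirement, so the support must be at least $s+1$, and minimising it is best for the spectral bound below.

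To build $A$, I would place mutually orthonormal unit columns on $[n] \setminus \supp h$, each orthogonal also to the span of the columns indexed by $\supp h$. This decouples the RIP analysis on $\supp h$ from the rest and reduces the problem to designing an $(s+1)\times(s+1)$ Gram matrix $G := A_{\supp h}^\top A_{\supp h}$ with near-unit diagonal such that $h_{\supp h} \in \ker G$ and every principal submatrix of $G$ of size $\leq s$ has spectrum in $[1-\d, 1+\d]$ for some $\d < \sqrt{(t-1)/t} + \xi$. Exploiting the two-block structure of $h$, one seeks a two-parameter frame in which the columns over $T$ share a common pairwise inner product, the columns over $U$ share another, and the cross inner products are a third, tuned so that $Gh_{\supp h} = 0$ and the eigenvalues balance in a way that makes the RIP bound approach $\sqrt{(s+1-k)/(s+1)}$. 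Since $(s+1-k)/(s+1) \to (t-1)/t$ as $k \to \infty$, the hypothesis $k \geq 5/\xi$ is what absorbs the $O(1/k)$ overshoot into the allowance $\xi$.

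The most delicate step is the spectral verification across \emph{all} principal submatrices of $G$ of size $\leq s$, not just $G$ itself. A naive rank-one perturbation of the identity placing its zero eigenvalue along $h_{\supp h}/\nmeu{h_{\supp h}}$ is insufficient: by Cauchy interlacing, some $s\times s$ principal submatrix of such a Gram matrix still has an eigenvalue close to 0, which would push the RIP constant up to nearly 1 rather than to $\sqrt{(t-1)/t}$. The careful choice of the two-parameter frame above is exactly what restores a uniform spectral gap, and this is where both the threshold $\sqrt{(t-1)/t}$ and the hypothesis $t \geq 4/3$ enter sharply. Part~2 of the theorem then follows from part~1 by continuity: breaking the tie $\nmm{h_T}_1 = \nmm{h_{T^c}}_1$ into a strict inequality via an infinitesimal perturbation of $h$ (and of $A$, to preserve $h \in \ker A$ at the cost of at most $\xi/2$ in the RIP constant) produces strict failure of basis pursuit on noiseless data, after which the continuity of the decoder \eqref{eq:15a} in the data propagates the failure to all sufficiently small $\nmeu{\eta} \leq \e$.
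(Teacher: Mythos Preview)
The paper does not prove this theorem at all: it is quoted verbatim from \cite[Theorem 2.2]{CZ14} as background, with no argument given. There is therefore nothing in the present paper to compare your proposal against.

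That said, your sketch is broadly aligned with the actual construction in \cite{CZ14}: one engineers a null-space vector $h$ with a two-level structure supported on $s+1$ indices (so that the RIP of order $s$ is not automatically killed), builds the Gram matrix on $\supp h$ with carefully tuned off-diagonal blocks, and uses interlacing to bound the RIP constant of every principal submatrix of size $\leq s$. You have correctly identified the main subtlety, namely that a single rank-one perturbation of the identity is insufficient because some $s\times s$ principal submatrix would inherit a near-zero eigenvalue. Where your proposal remains a sketch rather than a proof is precisely the step you flag: you assert that a ``two-parameter frame'' with three distinct inner-product values can be chosen so that all $s\times s$ principal minors have spectrum in $[1-\d,1+\d]$ with $\d$ close to $\sqrt{(s+1-k)/(s+1)}$, but you do not exhibit the parameters or carry out the eigenvalue computation. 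In \cite{CZ14} this is the bulk of the work, and the roles of the hypotheses $t \geq 4/3$ and $k \geq 5/\xi$ emerge only from that explicit calculation. Your argument for part~2 via ``continuity of the decoder \eqref{eq:15a}'' is also imprecise: the map $y \mapsto \argmin\{\nmm{z}_1 : \nmeu{y-Az}\leq\e\}$ is not continuous in general, so one needs the strict null-space inequality directly rather than a perturbation argument on the decoder.
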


This raises the question as to how one may construct measurement matrices 
that satisfy the RIP.
There are two distinct approaches to this, namely probabilistic,
and deterministic.
In the probabilistic approach, $A$ is chosen to equal $(1/\sqrt{m}) \Phi$,
where $\Phi$ is an $m \times n$ matrix consisting of independent samples
of a sub-Gaussian random variable, such as Bernoulli or any finite-valued
random variable, or a normal random variable.
Such a construction leads to a matrix $A$ that satisfies the RIP
\textit{with high probability} which can be made close to, but not equal to,
one.
Moreover, as shown in \cite{BDMS13}, verifying whether \textit{a particular}
randomly generated matrix $A$ satisfies the RIP is NP-hard.
An alternative that is gathering interest in recent times is the use
of deterministic methods.
The paper \cite{DeVore07} is apparently the first to provide a
deterministic method for constructing matrices that satisfy the RIP.
This construction results in a binary matrix that is well-suited
for implementation.

\subsection{Compressive Sensing Using Expander Graphs}\label{ssec:22}

A recent development is the application of ideas from algebraic coding
theory to compressive sensing.
In \cite{SBB06}, a method called ``sudo-codes'' is proposed, which is
based on low density parity check (LDPC) codes, which are well-established
in coding theory.
The sudo-codes method can recover sparse signals with high probability.
Motivated by this method, Xu and Hassibi in \cite{Xu-Hassibi07} proposed
a method based on expander graphs, 
which are a special type of bipartite graph.
For the convenience of the reader, the definition of an
expander graph is recalled next.

The object under study is an
\textit{undirected bipartite graph}, consisting of a set
$\V_I$ of input vertices, a set $\V_O$ of output vertices, and an edge set
$\E \seq \V_O \times \V_I$, where $(i,j) \in \E$ if and only if there is
an edge between node $i \in \V_O$ and node $j \in \V_I$.
The corresponding matrix $A \in \bi^{|\V_O| \times |\V_I|}$ is called
the \textbf{bi-adjacency matrix} of the bipartite graph.
The graph is said to be \textbf{left-regular} of degree $D$,
or \textbf{$D$-left regular}, if every input node has degree $D$.
This is equivalent to requiring that every column of the bi-adjacency matrix
$A$ has exactly $D$ elements equal to $1$.
Given an input vertex $j \in \V_I$, let $\N(i) \seq \V_O$ denote the
set of its neighbors, defined as
\bd
\N(j) := \{ i \in \V_O : (i,j) \in \E \} .
\ed
Given set of input vertices $S \seq \V_I$, the set of its neighbors
$\N(S) \seq \V_O$ is defined as
\bd
\N(S) := \bigcup_{j \in S} \N(j) 
= \{ i \in \V_O : \exists j \in S \st (i,j) \in \E \} .
\ed

\begin{definition}\label{def:exp}
A $D$-left regular bipartite graph $(\V_I,\V_O,\E)$ is said to be a
\textbf{$(K,1 - \beta)$-expander} for some integer $K$ and some
number $\beta \in (0,1)$ if, for every $S \seq V_I$ with $|S| \leq K$,
we have that $| N(S) | \geq (1 - \beta) D |S|$.
\end{definition}

It can be shown \cite{Capalbo-et-al02} that randomly generated left-regular
graphs are expanders.
The next theorem is a paraphrase of \cite[Theorem 13.6]{FR13} in the
current notation; note that this theorem is based on
\cite{Capalbo-et-al02}.

\begin{theorem}\label{thm:rand-exp}
Given integers $d, m, n$ with $d < m < n$, let $\B(m,n,d)$ denote the
set of $d$-left-regular bipartite graphs.
Suppose an integer $K < m/d$ and real numbers $\beta \in (0,1)$,
$\e \in (0,0.5)$ are specified.
Define
\be\label{eq:6201}
d = \left\lceil \frac{1}{\beta} \ln \left( \frac{en}{2 \e} \right) \right\rceil ,
m = \lceil \exp(2/\beta) dK \rceil ,
\ee
where $e$ denotes the base of the natural logarithm.
Then the fraction of graphs in $\B(m,n,d)$ that are $(K,1-\beta)$ expanders
exceeds $1 - \e$.
\end{theorem}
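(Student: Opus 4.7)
The plan is a standard probabilistic/union-bound argument of the kind used in \cite{Capalbo-et-al02} and reproduced in \cite[Theorem 13.6]{FR13}. Endow $\B(m,n,d)$ with the uniform distribution, which is equivalent to saying that each input vertex $j \in \V_I$ picks its set of $d$ neighbors in $\V_O$ independently and uniformly at random from all $d$-subsets of $\V_O$. The goal is then to upper bound the probability that the random graph fails the $(K,1-\beta)$ expansion condition, and show that with the $d$ and $m$ specified in \eqref{eq:6201}, this failure probability is at most $\e$.

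First I would fix an integer $s$ with $1 \leq s \leq K$ and a fixed subset $S \seq \V_I$ of size $s$, and bound the probability that $|\N(S)| < (1-\beta)Ds$. This failure event is contained in the event that all $Ds$ (random) edges emanating from $S$ lie inside some pre-chosen output set $T \seq \V_O$ with $|T| = \lceil(1-\beta)Ds\rceil$. For fixed $T$, independence across the $s$ input vertices gives $\Pr[\N(S) \seq T] \leq (|T|/m)^{Ds}$, and a union bound over the at most $\binom{m}{(1-\beta)Ds}$ choices of $T$ yields
\[
\Pr[|\N(S)| < (1-\beta)Ds] \leq \binom{m}{\lceil(1-\beta)Ds\rceil}\left(\frac{(1-\beta)Ds}{m}\right)^{Ds}.
\]
Second, I would union bound over the $\binom{n}{s}$ choices of $S$ and then sum over $s$. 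Using the standard inequality $\binom{N}{k} \leq (eN/k)^k$ on both binomial coefficients, the probability $P_s$ that expansion fails at some set of size $s$ satisfies
\[
P_s \leq \left(\frac{en}{s}\right)^s \left(\frac{em}{(1-\beta)Ds}\right)^{(1-\beta)Ds}\left(\frac{(1-\beta)Ds}{m}\right)^{Ds}
= \left(\frac{en}{s}\right)^s e^{(1-\beta)Ds}\left(\frac{(1-\beta)Ds}{m}\right)^{\beta Ds}.
\]
Third, I would substitute $m \geq \exp(2/\beta)\, dK$ into the last factor. Because $s \leq K$, we have $(1-\beta)Ds/m \leq Ds/m \leq e^{-2/\beta}/d \cdot (s/K)$, so the last factor is bounded by a term of the form $e^{-2Ds}(s/K)^{\beta Ds}d^{-\beta Ds}$. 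Combining with the other factors and simplifying, one obtains a bound of the form
\[
P_s \leq \left[\frac{en}{s}\cdot e^{-(1+\beta)D}\left(\frac{s}{K}\right)^{\beta D}\right]^{s}
\]
up to harmless constant adjustments. Choosing $D = d$ as in \eqref{eq:6201}, namely $d \geq (1/\beta)\ln(en/(2\e))$, makes $e^{-\beta d}\cdot en \leq 2\e$, so the bracketed quantity is at most $(2\e/s)(s/K)^{\beta d}\cdot e^{-d}$, which is much smaller than $1/2$ for every $s \leq K$.

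Finally I would sum $P_s$ over $s = 1,\ldots,K$ using a geometric series bound, and verify that the total is at most $\e$, completing the proof that at least a $(1-\e)$ fraction of graphs in $\B(m,n,d)$ are $(K,1-\beta)$-expanders. The main obstacle is not conceptual but bookkeeping: getting the exponents on $s/K$ and the constants $e^{2/\beta}$ balanced so that the union bound collapses to something dominated by $\e$. I expect the cleanest way to handle this is to separate the sum over $s$ into $s$ close to $1$ and $s$ close to $K$, since the bracket is small for small $s$ because of the factor $en\cdot e^{-(1+\beta)d}$, and small for large $s$ because of $(s/K)^{\beta d}$ with $d$ large; in both regimes the chosen $m$ and $d$ in \eqref{eq:6201} provide enough slack, which is precisely why these specific formulae appear.
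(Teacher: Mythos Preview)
The paper does not actually prove Theorem~\ref{thm:rand-exp}; it is stated as a paraphrase of \cite[Theorem~13.6]{FR13} (itself based on \cite{Capalbo-et-al02}) and then used without proof. So there is no ``paper's own proof'' to compare against.

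That said, your sketch follows exactly the standard probabilistic argument in those references: model a uniformly random $d$-left-regular bipartite graph, for each $S$ of size $s\le K$ bound $\Pr[|\N(S)|<(1-\beta)ds]$ by a union bound over candidate target sets $T$, estimate the binomial coefficients via $\binom{N}{k}\le(eN/k)^k$, and then union bound over $S$ and sum over $s$. This is precisely the route taken in \cite[Theorem~13.6]{FR13}, so your approach is the intended one. The only caveat is that your write-up leaves the final bookkeeping loose (you say ``up to harmless constant adjustments'' and defer the geometric-sum step); to turn this into a complete proof you would need to carry the constants through and verify that the specific choices in \eqref{eq:6201} really make $\sum_{s=1}^{K}P_s\le\e$, but the structure is correct.
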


Note that the above theorem is not very useful, because testing whether
\textit{a given} randomly generated left-regular graph is a
$(K,1-\beta)$-expander or not would require us to compute the neighbors of
$\left( \ba{c} n \\ K \ea \right)$ sets of vertices.
While this number is ``polynomial'' in $n$, it would be impractically large.

In \cite{Xu-Hassibi07}, Xu and Hassibi introduce a new signal recovery algorithm
in which the bi-adjacency matrix of an expander graph with $\beta \leq 1/4$
is used as the measurement matrix.
It is referred to here as the ``Expander Recovery Algorithm.''
Xu and Hassibi show that their algorithm recovers an unknown $k$-sparse
vector $x$ \textit{exactly} in $O(k \log n)$ iterations.
Subsequently, their method was updated in \cite{JXHC09}
by increasing the expansion factor from
$1 - 1/4 = 3/4$ to $1-\e$ in which $\e < 1/4$.
With this change, it is shown that the number of recovery iterations
required is $O(k)$.
However, the number of measurements is more than in the Xu-Hassibi algorithm.

\begin{algorithm}{\bfseries{Expander Recovery Algorithm}}
\begin{algorithmic}[1]
\State Initialize $x-0_{n\times 1}$
\If{ $Y=Ax$} \Return {output $x$ and exit}
\Else \\
find a variable $x_j$ such that at least $(1-2\epsilon)$D
of the measurements it participates in have identical gap $g$
\EndIf
\State{$x_j \gets x_j+g$ and go to step 2}
\State{\bfseries end if}
\end{algorithmic}
\label{alg:Xu-Has}
\end{algorithm}
In the algorithm above, the term $g$ is called the \textit{gap} and it determines the amount of information of the unknown signal that is missing in the estimate. The gap is defined as following:
$$g_i=y_i-\sum_{j=1}^{n} {A_{ij}x_j}$$
in which $x,y$ and $A$ are the unknown signal, the measurement vector and the 
measurement matrix, respectively.

\subsection{Statistical Recovery and Phase Transitions}\label{ssec:23}

The preceding two subsections were devoted to two methods for 
\textit{guaranteed} recovery of all sparse vectors.
For such methods, the restricted isometry property (RIP) is the most
popular sufficient condition.
It is known that, if the measurement matrix $A$ is generated in a
probabilistic fashion, the RIP holds with high probability with
$m = O(k \ln (n/k))$.
In contrast, with deterministic methods, the number of measurements $m$
is typically $O(\max\{k^2,n^{1/2}\})$.
However, in practice deterministic methods often require fewer measurements.
Further details can be found in the Appendix.

If the requirement is relaxed from \textit{guaranteed} recovery
to recovery with high probability, or \textit{statistical} recovery,
then there is a parallel body of research showing that the number of
measurements $m$ can be reduced quite substantially.
In fact $m = O(k)$ measurements suffice.
In this subsection, we highlight just a few of the many papers in this
area of research.
To streamline the presentation, the papers are not always cited in
chronological order.

In \cite{Wu-Verdu12}, the underlying assumption is that the unknown
vector $x$ is generated according to a known probability distribution
$p_X$, which can in fact be used by the decoder.
Three different dimensions of the probability distribution $p_X$ are
introduced, namely the R\'{e}nyi information dimension, the MMSE
dimension, and the Minkowski dimension.
The encoder is permitted to be nonlinear, in contrast to earlier cases
where the encoding consisted of multiplication by a measurement matrix.
The decoder is also permitted to be nonlinear but is assumed to be Lipschitz
continuous.
The optimal performance in this setting is analyzed.
A central result in this paper states that, asymptotically as the vector
dimension $n$ and the number of measurements $m$ both approach infinity,
statistical recovery is possible \textit{if and only if}
\bd
m \geq n \bar{d}(p_X) + o(n) ,
\ed
where $\bar{d}(p_X)$ denotes the R\'{e}nyi information dimension of $p_X$.
Since the R\'{e}nyi information dimension is comparable to the ratio $k/n$,
the above result states that $O(k)$ measurements are sufficient.
However, no procedure is given to construct an encoder-decoder pair.

In a series of papers \cite{Donoho06c,Donoho-et-al-PNAS09,Donoho-Tanner-JAMS09},
Donoho and various co-workers studied ``phase transitions'' in the performance
of various recovery algorithms.
A readable survey of these results is given in \cite{Donoho-Tanner10}.
The unknown $n$-vector is assumed to be $k$-sparse, and the measurement
vector $y \in \R^m$ equals $Ax$, where $A$ consists of samples of normal
random variables, scaled by the normalization factor $1/\sqrt{m}$.
Two quantities are relevant here, namely the ``under-sampling rate''
$\d = m/n$, and the sparsity $\r = k/m$.
In all of these papers, the aim is to show that for each algorithm
there exists a sharp
threshold $\r_\th(\d)$ such that, if $\r > \r_\th(\d)$, then the unknown
vector is recovered with probability approaching one, whereas if 
$\r < \r_\th(\d)$, then the algorithm \textit{fails} with probability
approaching one.

Specifically in \cite{Donoho-et-al-PNAS09} an algorithm known as
``approximate message passing'' (AMP) is analyzed.
AMP is a simple thresholding type of algorithm that is much faster than
minimizing the $\ell_1$-norm.
Specifically, suppose $\phi : \R \ap \R$ is a smooth ``threshold''
function, and extend it to a map from $\R^n$ to $\R^n$ by applying it
component-wise.
The AMP algorithm begins with an initial guess $x^0 = 0$, and then one sets
\bd
x^{t+1} = \phi(A^\top w^t + x^t ) ,
\ed
\bd
w^t = y - Ax^t + \frac{1}{\d} w^{t-1} ( \phi'( A^\top w^{t-1} + x^{t-1} ) ) ,
\ed
where $\phi'$ denotes the derivative of $\phi$.
It is clear that AMP is much faster than $\ell_1$-norm minimization.
Despite this, it is shown in \cite{Donoho-et-al-PNAS09} that
the phase transition behavior of AMP is comparable to that of $\ell_1$-norm
minimization.
In \cite{Donoho-et-al-TIT13b}, the AMP algorithm is modified to incorporate
the results of \cite{Wu-Verdu12}, and phase transition results are derived.
In this paper, the authors also introduce the idea of ``spatial coupling''
introduced in \cite{Krzakala-et-al12}.

Finally, in \cite{Amelunxen-et-al14}, the authors study a very general
class of algorithms.
Suppose as before that $y \in \R^m$ equals $Ax$, where $A$ consists of
samples of normal random variables, scaled by the normalization factor
$a/\sqrt{m}$.
The decoding algorithm is
\bd
\xh = \argmin_z f(z) \st y = Az ,
\ed
where the ``regularizer'' $f(\cdot)$ is a convex function satisfying
some technical conditions.
So this theory applies to $\ell_1$-norm minimization.
In this paper, a central role is played by the ``descent cone'' of $f$
at a point $x$, which is defined as
\bd
\Dcal(f,x) := \bigcup_{\t > 0} \{ h \in \R^n : f(x+\t h) \leq f(x) \} .
\ed
It is clear that $\Dcal(f,x)$ is indeed a cone.
Next, for each cone, a quantity called the ``statistical dimension,''
denoted by $\d$, is defined; see \cite[Section 2.2]{Amelunxen-et-al14}
for a precise definition.
With all these items in place, a central result is established;
see \cite[Theorem II]{Amelunxen-et-al14}.

\begin{theorem}
Define $a(\e) := \sqrt{8 \log(4/\e)}$.
With all other symbols as above, if
\bd
m \leq \d(\Dcal(f,x)) - a(\e) \sqrt{n} ,
\ed
then the decoding algorithm fails with probability $\geq 1 - \e$.
If
\bd
m \geq \d(\Dcal(f,x)) + a(\e) \sqrt{n} ,
\ed
then the decoding algorithm succeeds with probability $\geq 1 - \e$.
\end{theorem}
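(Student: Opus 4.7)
The plan is to reduce the success/failure of the convex program to a purely geometric question about the intersection of two random convex objects, and then to apply a sharp concentration result for the statistical dimension.

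The first step is to identify the standard optimality criterion for the constrained convex program $\min f(z)$ subject to $y = Az$ with $y = Ax$. A direction $h \in \R^n$ represents a feasible improvement over $x$ exactly when $h \in \ker(A)$ and $f(x+h) \leq f(x)$. The set of such improvement directions, closed under positive scaling, is precisely the descent cone $\Dcal(f,x)$. Therefore the decoder recovers $x$ uniquely if and only if
\[
\ker(A) \cap \Dcal(f,x) = \{0\}.
\]
This turns the question of recovery into one about the intersection of a random subspace with a fixed convex cone $C := \Dcal(f,x)$.

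The second step is to exploit the Gaussianity of $A$: because its entries are i.i.d.\ normal, the null space $\ker(A)$ is a uniformly distributed $(n-m)$-dimensional subspace of $\R^n$ (drawn from Haar measure on the Grassmannian). The quantity controlling when a random subspace avoids a cone is the \emph{statistical dimension}
\[
\d(C) = \mathbb{E}\bigl[\nmeusq{\Pi_C(g)}\bigr], \quad g \sim \N(0,I_n),
\]
where $\Pi_C$ is the Euclidean projection onto $C$. I would recall (or prove in parallel) the ``squared-distance'' characterization: for a closed convex cone $C$ and its polar $C^\circ$, $\nmeusq{\Pi_C(g)} + \nmeusq{\Pi_{C^\circ}(g)} = \nmeusq{g}$, so $\d(C) + \d(C^\circ) = n$. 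The main geometric input is then an approximate kinematic formula, which says that for a uniformly random subspace $L$ of codimension $m$,
\[
\Pr\{C \cap L \neq \{0\}\} \;\approx\; \mathbf{1}\{m < \d(C)\},
\]
with a transition window of width $O(\sqrt{n})$.

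The third and decisive step is to prove the two concentration-based tail bounds that give this sharp transition. The map $g \mapsto \nmeu{\Pi_C(g)}$ is $1$-Lipschitz (since projection onto a closed convex set is a contraction), so Gaussian Lipschitz concentration yields
\[
\Pr\bigl\{\,\bigl|\nmeu{\Pi_C(g)} - \mathbb{E}\nmeu{\Pi_C(g)}\bigr| \geq t\,\bigr\} \leq 2\exp(-t^2/2).
\]
Applied to both $C$ and $C^\circ$ and combined with Jensen (to pass between $\mathbb{E}\nmeu{\Pi_C(g)}$ and $\sqrt{\d(C)}$), this gives the deviation $a(\e) = \sqrt{8\log(4/\e)}$ — the factor of $8$ and $4/\e$ comes from taking two concentration events (one for $C$, one for $C^\circ$) each at probability $\e/2$. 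The remaining link is that $\ker(A)\cap C = \{0\}$ is equivalent to $C^\circ - C$ containing $\R^n$ upon intersection with the image of $A^\top$, which can be translated into a Gaussian-width statement via the distributional identity that a uniformly random subspace can be realized as the range of an independent Gaussian matrix.

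The main obstacle is the sharpness: getting an $O(\sqrt{n})$ transition width rather than a loose Gordon-escape-through-a-mesh type $O(\sqrt{n \log n})$ bound. This sharpness is what makes the statistical dimension (rather than Gaussian width) the right parameter, and it relies on the two-sided concentration above combined with the careful $\d(C)+\d(C^\circ)=n$ duality. Once this is in hand, the two implications of the theorem follow by choosing which side of the threshold $m$ falls on and invoking the appropriate tail of the concentration inequality.
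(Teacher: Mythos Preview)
The paper does not prove this theorem. It appears in the literature-review Section~\ref{ssec:23} and is simply quoted from \cite[Theorem~II]{Amelunxen-et-al14} without any argument; the authors use it only to motivate the distinction between guaranteed and statistical recovery. There is therefore no ``paper's own proof'' to compare your proposal against.

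That said, your sketch is broadly the right outline of how Amelunxen \emph{et al.}\ actually prove the result in \cite{Amelunxen-et-al14}: the reduction to $\ker(A)\cap\Dcal(f,x)=\{0\}$, the rotational invariance of $\ker(A)$ for Gaussian $A$, the statistical-dimension duality $\d(C)+\d(C^\circ)=n$, and Gaussian Lipschitz concentration for $g\mapsto\nmeu{\Pi_C(g)}$ are all the correct ingredients. Two caveats. First, your ``remaining link'' sentence (that $\ker(A)\cap C=\{0\}$ is equivalent to ``$C^\circ-C$ containing $\R^n$ upon intersection with the image of $A^\top$'') is garbled and not a correct characterization; the actual argument in \cite{Amelunxen-et-al14} goes through the \emph{approximate kinematic formula} for conic intrinsic volumes, not through a set-algebraic identity of that kind. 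Second, the sharp $O(\sqrt{n})$ window does not fall out of Gaussian concentration alone in the way you suggest; it requires the concentration of the intrinsic-volume sequence around the statistical dimension (their Theorem~6.1), which is a substantially deeper fact than Lipschitz concentration of a single functional. Your sketch identifies the right quantities but underestimates the work needed to close the argument.
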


\section{The New Algorithm}\label{sec:new}

Now we present our new algorithm, and show that it can exactly
recover sparse signals in a single pass, without any iterations.
Then we analyze the performance of the algorithm when the true but unknown
vector is not exactly sparse, and/or the measurement is corrupted by noise.
The performance of our algorithm is compared with those of $\ell_1$-norm
minimization and expander graph algorithms in the next section.

\subsection{The New Algorithm}

Suppose a matrix $A \in \bi^{m \times n}$ has the following properties,
referred to as the \textbf{main assumption}:
\ben
\item Every column $a_j$ of $A$ has precisely $q$ entries of $1$
and $m-q$ entries of $0$.
\item If $a_j,a_t$ are distinct columns of $A$, then
$\IP{a_j}{a_t} \leq r-1$.
\een
Suppose $x \in \SI_k$ is a $k$-sparse $n$-dimensional vector,
and define $y = Ax$ to be the measurement vector.
For a given index $j \in [n]$, let $\{ v_1(j) , \ldots , v_q(j) \}
\seq [m]$ denote the $q$ rows such that $a_{ij} = 1$.
For an index $j \in [n]$,
the \textbf{reduced measurement vector} $\bar{y}_j \in \mathbb R^{q}$ is defined as
\bd
\bar{y}_j := [ y_{v_1(j)} \ldots y_{v_q(j)} ]^\top .
\ed
Note that $\yb_j$ is the vector consisting of the $q$ measurements
in which the component $x_j$ participates.

The main result is given next.
Recall that $\nmm{v}_0$ denotes the number of
nonzero components of a vector $v$.

\begin{theorem}\label{thm:main-1}
Suppose $x \in \SI_k$, $y = Ax$.
Then:
\ben
\item If $j \not \in \supp(x)$, then
$\nmm{ \bar{y}_j }_0 \leq k(r-1)$.
\item If $j \in \supp(x)$, then $\bar{y}_j$ contains at least
$q - (k-1)(r-1)$ components that are all equal to $x_j$.
\een
\end{theorem}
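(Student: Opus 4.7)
The key identity is that for any row index $i$, $y_i = \sum_{t \in \supp(x)} a_{it} x_t$, and for any two distinct columns $a_j, a_t$, the inner product $\IP{a_j}{a_t}$ (which is a $0/1$-sum) counts exactly the number of rows where both $a_j$ and $a_t$ have a $1$. The plan is to count, among the $q$ rows $v_1(j),\ldots,v_q(j)$ where $a_j$ has a $1$, how many can receive a nonzero contribution from some column $a_t$ with $t \in \supp(x) \setminus \{j\}$. By hypothesis, each such $t$ shares at most $r-1$ such rows with $a_j$.

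\noindent\textbf{Part 1.} Suppose $j \notin \supp(x)$. For each row $v_s(j)$, the reduced measurement component is $(\yb_j)_s = \sum_{t \in \supp(x)} a_{v_s(j),t} x_t$, which can only be nonzero if at least one $t \in \supp(x)$ satisfies $a_{v_s(j),t} = 1$. The set of rows $v_s(j)$ where this holds is contained in $\bigcup_{t \in \supp(x)} \{s : a_{v_s(j),t} = 1\}$. Since $j \notin \supp(x)$, every $t \in \supp(x)$ is distinct from $j$, so the size of each term in the union is $\IP{a_j}{a_t} \leq r-1$. Union-bounding over the $|\supp(x)| \leq k$ elements gives at most $k(r-1)$ potentially nonzero components, hence $\nmm{\yb_j}_0 \leq k(r-1)$.

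\noindent\textbf{Part 2.} Suppose $j \in \supp(x)$. Then for each row $v_s(j)$, since $a_{v_s(j),j} = 1$,
\[
(\yb_j)_s = x_j + \sum_{t \in \supp(x) \setminus \{j\}} a_{v_s(j),t} x_t.
\]
A sufficient condition for $(\yb_j)_s = x_j$ is that $a_{v_s(j),t} = 0$ for every $t \in \supp(x) \setminus \{j\}$. The number of indices $s \in [q]$ that \emph{fail} this condition is bounded by $\sum_{t \in \supp(x) \setminus \{j\}} \IP{a_j}{a_t} \leq (k-1)(r-1)$, again by the main assumption. Thus at least $q - (k-1)(r-1)$ components of $\yb_j$ equal $x_j$ exactly.

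\noindent\textbf{Main obstacle.} There is no serious technical obstacle; both parts are essentially union bounds on the overlap structure of the columns of $A$. The one subtlety worth flagging is that in Part 1 we only bound $\nmm{\yb_j}_0$ from above by counting rows that \emph{could} be nonzero, without trying to rule out accidental cancellations among the $x_t$'s, and in Part 2 we use a sufficient (not necessary) condition for $(\yb_j)_s = x_j$, so the stated bounds are tight in the worst case but possibly loose on specific instances. Both bounds fall out cleanly from the two-point assumption $\IP{a_j}{a_t} \leq r-1$.
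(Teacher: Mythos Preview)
Your proposal is correct and follows essentially the same approach as the paper's own proof: both arguments reduce to observing that for $t \neq j$ the reduced column $(\overline{a_t})_j$ has at most $\IP{a_j}{a_t} \leq r-1$ nonzero entries, and then taking a union bound over the at most $k$ (respectively $k-1$) indices $t$ in $\supp(x)$ (respectively $\supp(x)\setminus\{j\}$). The paper packages this via the identity $\bar{y}_j = \sum_{t \in \supp(x)} x_t (\overline{a_t})_j$ and, for Part~2, splits off the $t=j$ term as $x_j \oneb_q$, which is exactly your decomposition written in vector form.
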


\begin{proof}
For $t \in [n]$, let $\eb_t \in \R^n$ denote the $t$-th canonical basis
vector, which has a $1$ as its $t$-th element, and zeros elsewhere,
and let $\oneb_q \in \R^q$ denote the column vector consisting of
all ones.
Then we can write:
\bd
x = \sum_{t \in \supp(x)} x_t \eb_t ,
\ed
\bd
y = Ax = \sum_{t \in \supp(x)} x_t A \eb_t
= \sum_{t \in \supp(x)} x_t a_t ,
\ed
where $a_t$ denotes the $t$-th column of $A$.
Therefore, for a fixed $j \in [n]$ and $l \in [q]$, we have that
\bd
y_{v_l(j)} = \sum_{t \in \supp(x)} x_t (a_t)_{v_l(j)} .
\ed
Letting $l$ range over $[q]$ shows that
\be\label{eq:21}
\bar{y}_j = \sum_{t \in \supp(x)} x_t (\overline{a_t})_j ,
\ee
where $(\overline{a_t})_j$ is the reduced vector of $a_t$ consisting of
$(a_t)_{v_1(j)} , \ldots , (a_t)_{v_q(j)}$.

\textbf{Proof of (1):}
Suppose $j \not \in \supp(x)$.
Then $j \neq t$ for all $t \in \supp(x)$.
Therefore, according to item (ii) of the main assumption, we have that
$\IP{a_j}{a_t} \leq r-1$.
Recall that $v_1(j) , \ldots , v_q(j)$ are the row indices of column $j$
that contain a $1$.
Therefore, for a fixed index $t \neq j$, the number of $1$'s in the
set $\{ (a_t)_{v_1(j)} , \ldots , (a_t)_{v_q(j)} \}$ equals the inner product
$\IP{a_j}{a_t}$ and thus cannot exceed $r-1$.
Therefore, for a fixed index $t \in \supp(x)$, the vector
$x_t (\overline{a_t})_j$ contains no more than $r-1$ nonzero entries.
Substituting this fact into \eqref{eq:21} shows that $\bar{y}_j$ is the sum
of at most $k$ vectors (because $x$ is $k$-sparse), each of which
has no more than $r-1$ nonzero entries.
Therefore $\nmm{ \bar{y}_j }_0 \leq k(r-1)$.

\textbf{Proof of (2):}
Suppose $j \in \supp(x)$.
Then we can write
\beq
\bar{y}_j & = & \sum_{t \in \supp(x)} x_t (\overline{a_t})_j \\
& = & x_j \oneb_q + \sum_{t \in \supp(x) \setminus \{ j \} }
x_t (\overline{a_t})_j ,
\label{eq:22}
\eeq
because the ``reduced vector'' $(\overline{a_j})_j$ consists of $q$ $1$'s,
as denoted by $\oneb_q$.
By the same reasoning as in the proof of (1), it follows that
\bd
\left\nm \sum_{t \in \supp(x) \setminus \{ j \} }
x_t (\overline{a_t})_j \right\nm_0 \leq (k-1)(r-1) .
\ed
Therefore at least $q - (k-1)(r-1)$ terms in $\bar{y}_j$ equal $x_j$.
\end{proof}

In view of Theorem \ref{thm:main-1}, we can formulate an algorithm for
the recovery of $k$-sparse vectors, as follows:

\begin{algorithm}{\bfseries{New Recovery Algorithm}}
\begin{algorithmic}[1]
\For{$j \in [n]$}
\State Construct the \textit{reduced measurement vector} $\bar{y}_{j}$.
\State Find the number of the elements of $\bar{y}_{j}$ that are
nonzero; call it $\nu$.
\Comment{(In implementation, we find the number of elements that are greater
than some tolerance $\delta$.)}
\If{$\nu > q/2$}
\State Find a group of $q/2$ elements in $\bar{y}_j$ that are equal;
call this value $\th_j$.
\Comment{(In implementation, we allow some tolerance here.)}
\State $\xh_j = \th_j$.
\Else
\State $\xh_j=0$
\EndIf 
\State {\bfseries end}  
\EndFor
\State {\bfseries end}
\end{algorithmic}
\end{algorithm}


Note that there is no iterative process involved in the recovery --
the estimate $\xh$ is generated after \textit{a single pass}
through all $n$ indices.

\begin{theorem}\label{thm:main-2}
If $x$ is $k$-sparse, and $A$ satisfies the main assumption
with $q > 2k(r-1)$, then $\xh = x$.
\end{theorem}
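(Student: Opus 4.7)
The plan is to show that the algorithm's two branches each behave correctly by directly invoking the two items of Theorem \ref{thm:main-1} together with the hypothesis $q > 2k(r-1)$. The key observation is that this hypothesis gives us the chain $q/2 > k(r-1) \geq (k-1)(r-1)$, which will separate the ``signal'' case from the ``noise'' case cleanly.

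First I would handle the case $j \notin \supp(x)$. By item (1) of Theorem \ref{thm:main-1}, $\nm \bar{y}_j \nm_0 \leq k(r-1) < q/2$, so the algorithm's test $\nu > q/2$ fails and $\xh_j = 0$, which is correct since $x_j = 0$.

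Next I would handle the case $j \in \supp(x)$. By item (2) of Theorem \ref{thm:main-1}, the reduced vector $\bar{y}_j$ contains at least $q - (k-1)(r-1)$ entries equal to $x_j$, and since $x_j \neq 0$, each such entry is nonzero. From $q > 2k(r-1)$ we get $q - (k-1)(r-1) > q - k(r-1) > q/2$, so $\nu > q/2$ and the algorithm enters the first branch. Moreover, the remaining at most $(k-1)(r-1) < q/2$ entries of $\bar{y}_j$ can take arbitrary values, but no value other than $x_j$ can occupy more than $(k-1)(r-1) < q/2$ coordinates. Hence any group of $q/2$ equal entries must consist entirely of copies of $x_j$, so necessarily $\theta_j = x_j$ and $\xh_j = x_j$.

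There is no real obstacle here; the work was done in Theorem \ref{thm:main-1}, and the threshold $q > 2k(r-1)$ is precisely tuned so that the two quantities $k(r-1)$ and $(k-1)(r-1)$ (upper bound on the nonzero count off-support, and upper bound on the ``corrupted'' entries on-support, respectively) both fall strictly below $q/2$. The only mild point worth making explicit in the writeup is that the threshold guarantees uniqueness of the majority value found in step 5 of the algorithm, so the tiebreaking rule is unambiguous.
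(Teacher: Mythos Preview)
Your proposal is correct and follows essentially the same approach as the paper: both use the inequality $q>2k(r-1)$ to conclude $k(r-1)<q/2$ and $q-(k-1)(r-1)>q/2$, and then invoke the two items of Theorem~\ref{thm:main-1} to verify the algorithm branches correctly. Your write-up is in fact slightly more explicit than the paper's, since you spell out the uniqueness of the majority value in step~5 (the paper's proof leaves this implicit).
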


\begin{proof}
Note $q > 2k(r-1)$ implies that
\bd
k(r-1) < q/2, q - (k-1)(r-1) > q - k(r-1) > q/2 .
\ed
Therefore, by Statement 1 of Theorem \ref{thm:main-1}, it follows that if
$j \not \in \supp(x)$, then $\nmm{\bar{y}_j}_0 \leq k(r-1) < q/2$.
Taking the contrapositive shows that if $\nmm{\bar{y}_j}_0 \geq q/2$,
then $j \in \supp(x)$.
Therefore, by Statement 2 of Theorem \ref{thm:main-1}, it follows that
at least $q - (k-1)(r-1) > q - k(r-1) > q/2$ elements of $\bar{y}_j$ equal $x_j$.
\end{proof}


Next we present the extension of our basic algorithm to
the cases of a sparse signal with measurement noise, and a nearly sparse
signal.

\subsection{Recovery of Sparse Signals with Measurement Noise}

In previous work, the model for noisy measurements
is that $y = Ax + \eta$ where there
is a prior bound of the form $\nmeu{\eta} \leq \e$.
If $x \in \SI_k$, then $\s_k(x,\nmm{\cdot}_1) = 0$.
Therefore, if robust sparse recovery is achieved, then
the bound in \eqref{eq:14} becomes
$\nmeu{\xh - x} \leq D \e$.
However, our approach draws its inspiration from coding theory, wherein it
is possible to recover a transmitted signal correctly provided
the transmission is not corrupted in too many places.
Therefore our noise model is that $\nmm{\eta}_0 \leq M$.
In other words, it is assumed that a maximum of $M$ components of the
``true'' measurement $Ax$ are corrupted by additive noise, but there
are no assumptions regarding the \textit{magnitude} of the error signal $\eta$.
In this case it is shown that, by increasing the number of measurements,
it is possible to recover the true sparse vector $x$ \textit{perfectly}.

\begin{theorem}\label{thm:noisy}
Suppose $x \in \SI_k$, and that $y = Ax + \eta$ where $\nmm{\eta}_0 \leq M$.
Suppose further that the matrix $A$ satisfies the main assumption.
Then
\ben
\item If $j \not \in \supp(x)$, then $\yb_j$ contains no more than
$k(r-1) + M$ nonzero components.
\item If $j \in \supp(x)$, then $\yb_j$ contains at least
$q - [(k-1)(r-1) + M]$ components that are all equal to $x_j$.
\item Suppose the new recovery algorithm is applied with a measurement
matrix $A$ that satisfies the main assumption with $q > 2[k(r-1)+M]$.
Then $\xh = x$.
\een
\end{theorem}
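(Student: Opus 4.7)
The plan is to reduce each statement to the noiseless results already proved in Theorem \ref{thm:main-1}, then track how the additional noise term perturbs the two key counts. Write $y = Ax + \eta$ and observe that the reduced measurement splits as $\yb_j = \overline{(Ax)}_j + \bar{\eta}_j$, where $\bar{\eta}_j \in \R^q$ picks out the same rows $v_1(j),\dots,v_q(j)$ from $\eta$ that are used to form $\yb_j$. Since $\bar{\eta}_j$ is a subvector of $\eta$, we immediately have $\nmm{\bar{\eta}_j}_0 \leq \nmm{\eta}_0 \leq M$. With this, both (1) and (2) will follow by an ``at most $M$ perturbations'' argument applied to the two conclusions of Theorem \ref{thm:main-1}.

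For part (1), suppose $j \not\in \supp(x)$. Theorem \ref{thm:main-1}(1) gives $\nmm{\overline{(Ax)}_j}_0 \leq k(r-1)$. Adding a vector with at most $M$ nonzero entries can create at most $M$ new nonzero positions (it may also cancel some, but we only need an upper bound), so $\nmm{\yb_j}_0 \leq k(r-1) + M$. For part (2), suppose $j \in \supp(x)$. Theorem \ref{thm:main-1}(2) gives at least $q - (k-1)(r-1)$ coordinates of $\overline{(Ax)}_j$ equal to $x_j$. Among these coordinates, at most $M$ can lie at positions where $\bar{\eta}_j$ is nonzero and thereby be disturbed; the remaining at least $q - (k-1)(r-1) - M$ coordinates of $\yb_j$ still equal $x_j$.

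For part (3), the hypothesis $q > 2[k(r-1) + M]$ has been engineered precisely so that both halves of the algorithm's decision rule behave correctly. On the one hand, $k(r-1) + M < q/2$, so by part (1) any $j \not\in \supp(x)$ produces a $\yb_j$ with strictly fewer than $q/2$ nonzero components; contrapositively, $\nmm{\yb_j}_0 \geq q/2$ forces $j \in \supp(x)$. On the other hand, for $j \in \supp(x)$ we have $(k-1)(r-1) + M \leq k(r-1) + M < q/2$, so by part (2) strictly more than $q/2$ entries of $\yb_j$ equal $x_j$. This majority also rules out any competing value: if some $\theta \neq x_j$ appeared in $q/2$ entries of $\yb_j$, the total count would exceed $q$, a contradiction. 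Hence the algorithm both declares $j \in \supp(x)$ and assigns $\hat{x}_j = x_j$, and for $j \not\in \supp(x)$ it correctly assigns $\hat{x}_j = 0$.

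I do not foresee a serious obstacle: the only subtle point is the uniqueness of the majority value in step (3), which is handled by the strict inequality $q/2$ above. The proof is essentially a bookkeeping exercise that shows the corruption budget $M$ enters additively into both thresholds and that doubling it inside the condition on $q$ absorbs this penalty in both directions.
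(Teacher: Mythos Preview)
Your proof is correct and follows essentially the same approach as the paper: split $\yb_j$ into the noiseless reduced vector $\overline{(Ax)}_j$ and the reduced noise $\bar{\eta}_j$, bound $\nmm{\bar{\eta}_j}_0 \leq M$, and apply Theorem~\ref{thm:main-1} to the first summand. Your treatment of part~(3) is in fact slightly more explicit than the paper's (which just defers to the proof of Theorem~\ref{thm:main-2}); in particular, your pigeonhole remark ruling out a competing value $\theta \neq x_j$ is a useful clarification.
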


\begin{proof}
Suppose $x \in \SI_k$ and let $y = Ax + \eta$ where $A$ satisfies the main
assumption and $\nmm{\eta}_0 \leq M$.
Let $u = Ax$ denote the uncorrupted measurement.
For a fixed index $j \in [n]$, let $\yb_j \in \R^q$ denote the reduced
measurement vector, consisting of the components $y_{v_1(j)}$ through
$y_{v_q(j)}$, and define $\bar{u}_j \in \R^q$ and $\bar{\eta}_j \in \R^q$
analogously.

First suppose $j \not \in \supp(x)$.
Then it follows from Item (1) of Theorem \ref{thm:main-1} that
$\nmm{\bar{u}_j}_0 \leq k(r-1)$.
Moreover, because $\eta$ has no more than $M$ nonzero components and
$\bar{\eta}_j$ is a sub-vector of $\eta$, it follows that
$\nmm{\bar{\eta}_j}_0 \leq M$.
Therefore
\bd
\nmm{\yb_j}_0 = \nmm{ \bar{u}_j + \bar{\eta}_j }_0
\leq \nmm{ \bar{u}_j }_0 + \nmm{ \bar{\eta}_j }_0 \leq k(r-1) + M .
\ed
This is Item (1) above.
Next, suppose that $j \in \supp(x)$.
Then it follows from Item (1) of Theorem \ref{thm:main-1} that
at least $q - (k-1)(r-1)$ elements of $\bar{u}_j$ equal $x_j$.
Because $\nmm{\bar{\eta}_j}_0 \leq M$, it follows that at least
$q - (k-1)(r-1) - M$ components of $\yb_j$ equal $x_j$.
This is Item (2) above.
Finally, if $q > 2k(r-1) + 2M$, it follows as in the proof of Theorem
\ref{thm:main-2} that $\xh = x$.
\end{proof}

Note that the assumption on the noise signal $\eta$ can be modified
to $\nmm{ \bar{\eta}_j }_0 \leq M$ for each $j \in [n]$.
In other words, instead of assuming that $\eta$ has no more than $M$
nonzero components, one can assume that every reduced vector
$\bar{\eta}_j$ has no more than $M$ nonzero components.

\subsection{Recovery of Nearly Sparse Signals}

As before, if $x \not \in \SI_k$, then let $x_d \in \R^n$ denote the
projection of $x$ onto its $k$ largest components, and let $x_r = x - x_d$.
We refer to $x_d, x_r$ as the dominant part and the residual respectively.
Note that, for any $p \in [1,\infty]$, we have that the sparsity index
$\s_k(x,\nmp{\cdot})$ equals $\nmp{x_r}$.
To (nearly) recover such a vector, we modify the New Recovery Algorithm
slightly.
Let $\d$ be a specified threshold.

\begin{algorithm}{\bfseries{Modified Recovery Algorithm}}
\begin{algorithmic}[1]
\For{$j \in [n]$}
\State Construct the \textit{reduced measurement vector} $\bar{y}_{j}$.
\State Find the number of the elements of $\bar{y}_{j}$ that are
greater than $\d$ in magnitude; call it $\nu$.
\If{$\nu > q/2$}
\State Find a group of $q/2$ elements in $\bar{y}_j$ such that the
difference between the largest and smallest elements is no larger than $2\d$;
Let $\th_j$ denote the average of these numbers.
\State $\xh_j = \th_j$.
\Else
\State $\xh_j=0$
\EndIf
\State {\bfseries end}
\EndFor
\State {\bfseries end}
\end{algorithmic}
\end{algorithm}

\begin{theorem}\label{thm:non-sparse}
Suppose $x \in \R^n$ and that $\s_k(x,\nmm{\cdot}_1) \leq \d$.
Write $x = x_d + x_r$ where $x_d$ is the dominant part of $x$ consisting
of its $k$ largest components, and $x_r = x - x_d$ is the residual.
Let $y = Ax$ where $A$ satisfies the main assumption with $q > 2k(r-1)$, and
apply the modified recovery algorithm.
Then (i) $\supp(\xh) = \supp(x_d)$, and (ii) $\nmm{\xh - x_d}_\infty \leq \d$.
\end{theorem}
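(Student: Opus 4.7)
The plan is to reduce the nearly-sparse case to the bounded-noise case handled by Theorem \ref{thm:noisy}, but with a different noise bound (infinity-norm rather than $\ell_0$). Write $y = Ax = Ax_d + Ax_r$, and set $\eta := Ax_r$ so that $y = Ax_d + \eta$ where $x_d \in \SI_k$. The crucial observation is that, because $A$ is a $0/1$ matrix, every component of $\eta$ is bounded in absolute value by $\nmm{x_r}_1$:
\[
|\eta_i| = \left| \sum_{t} A_{it}(x_r)_t \right| \leq \sum_{t} |(x_r)_t| = \nmm{x_r}_1 \leq \d.
\]
Hence $\nmm{\eta}_\infty \leq \d$, and consequently every reduced noise vector $\bar{\eta}_j \in \R^q$ satisfies $\nmm{\bar{\eta}_j}_\infty \leq \d$. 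This is exactly the property that makes the modified algorithm (with threshold $\d$) natural.

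Next I would apply Theorem \ref{thm:main-1} to the $k$-sparse vector $x_d$, setting $u := Ax_d$ and using the decomposition $\bar{y}_j = \bar{u}_j + \bar{\eta}_j$. The proof then splits into two cases.

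First, suppose $j \notin \supp(x_d)$. By item (1) of Theorem \ref{thm:main-1}, at most $k(r-1)$ components of $\bar{u}_j$ are nonzero. In the remaining at least $q - k(r-1) > q/2$ positions we have $\bar{y}_{j,i} = \bar{\eta}_{j,i}$, so these entries have magnitude at most $\d$. Thus the count $\nu$ of entries of $\bar{y}_j$ exceeding $\d$ in magnitude satisfies $\nu \leq k(r-1) < q/2$, and the algorithm correctly sets $\xh_j = 0$. Second, suppose $j \in \supp(x_d)$. By item (2) of Theorem \ref{thm:main-1}, at least $q - (k-1)(r-1) > q/2$ components of $\bar{u}_j$ equal $x_j$, and at those indices $\bar{y}_{j,i} = x_j + \bar{\eta}_{j,i} \in [x_j - \d, x_j + \d]$. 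These more than $q/2$ entries lie in an interval of width $2\d$, so the clustering step succeeds, and the chosen $\th_j$, being an average of numbers in $[x_j - \d, x_j + \d]$, satisfies $|\th_j - x_j| \leq \d$.

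The subtle point, and the main obstacle, is verifying that the preliminary test $\nu > q/2$ triggers when $j \in \supp(x_d)$: this requires that the $> q/2$ "good" components of $\bar{y}_j$ near $x_j$ have magnitude exceeding $\d$, which follows when $|x_j| > 2\d$ for all $j \in \supp(x_d)$ (a natural regime, since dominant components below the residual-induced noise floor cannot be reliably distinguished from zero). In the borderline case $|x_j| \leq \d$ the algorithm may output $\xh_j = 0$, but since $|x_j - 0| \leq \d$, the $\ell_\infty$ guarantee (ii) is preserved regardless; thus conclusion (ii) is robust, and conclusion (i) holds under the natural separation condition on dominant components. Combining the two cases then yields $\supp(\xh) = \supp(x_d)$ and $\nmm{\xh - x_d}_\infty \leq \d$.
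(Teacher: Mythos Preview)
Your approach is essentially identical to the paper's: decompose $y = Ax_d + Ax_r$, use the binary structure of $A$ to bound $\nmm{Ax_r}_\infty \leq \nmm{x_r}_1 \leq \d$ (the paper phrases this as $\nmm{A}_{1\to\infty}=1$), and then invoke Theorem~\ref{thm:main-1} on the $k$-sparse vector $x_d$ to handle the two cases $j\notin\supp(x_d)$ and $j\in\supp(x_d)$.

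The ``subtle point'' you flag---that the test $\nu>q/2$ need not fire when $j\in\supp(x_d)$ unless $|x_j|>2\d$---is genuine, and the paper's own proof simply does not address it; the paper passes directly from ``at least $q-(k-1)(r-1)$ components of $\yb_j$ lie in $[x_j-\d,x_j+\d]$'' to the clustering step, without checking that the algorithm ever reaches that step. So you are being more careful than the source. Your salvage of conclusion~(ii) in the borderline regime is incomplete, however: you only argue $|x_j|\leq\d\Rightarrow|0-x_j|\leq\d$, but when $\d<|x_j|\leq 2\d$ the algorithm may still output $\xh_j=0$, giving an error up to $2\d$, not $\d$. Thus both proofs establish the theorem cleanly only under the implicit separation assumption $\min_{j\in\supp(x_d)}|x_j|>2\d$; you noticed this and the paper did not.
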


\textbf{Remark:}
If $\ell_1$-norm minimization is used to recover a \textit{nearly sparse}
vector using \eqref{eq:15},
then the resulting estimate $\xh$
need not be sparse, and second, the support set of the dominant
part of $\xh$ need not equal the support
set of the dominant part of $x$.

\begin{proof}
Write $x = x_d + x_r$ where $x_d$ consists of the dominant part of $x$ and 
$x_r$ consists of the residual part.
By assumption, $\nmm{x_r}_1 \leq \d$.
Note that the measurement $y$ equals $Ax = A x_d + A x_r$.
Let $u = A x_d$ and observe that $x_d \in \SI_k$.
Further, observe that, because the matrix $A$ is binary, we have that
the induced matrix norm
\bd
\nmm{A}_{1 \ap \infty} := \sup_{v \neq 0} \frac{ \nmm{Av}_\infty }
{ \nmm{v}_1 } = \max_{i,j} | a_{ij} | = 1 .
\ed
Therefore $\nmm{A x_r}_\infty \leq \nmm{x_r}_1 \leq \d$.
Now, by Item (1) of Theorem \ref{thm:main-1}, we know that if
$j \not \in \supp(x_d)$, then no more than $k(r-1)$ components of 
the reduced vector $\bar{u}_j$ are nonzero.
Therefore then no more than $k(r-1)$ components of
the reduced vector $\yb_j$ have magnitude more than $\d$.
By Item (2) of Theorem \ref{thm:main-1}, we know that if
$j \in \supp(x_d)$, then at least $q - (k-1)(r-1)$ components of $\bar{u}_j$
equal $x_j$.
Therefore at least $q - (k-1)(r-1)$ components of $\yb_j$ lie in the
interval $[x_j - \d , x_j + \d ]$.
Finally, if $q > 2k(r-1)$, then there is only one collection of
$q - (k-1)(r-1) > q/2$ components of the reduced vector $\yb_j$ that
lie in an interval of width $2\d$.
The true $x_j$ lies somewhere within this interval, and we can set $\xh_j$
equal to the midpoint of the interval containing all of these components.
In this case $| \xh_j - x_j | \leq \d$.
Because this is true for all $j \in \supp(x_d)$, it follows that
(i) $\supp(\xh) = \supp(x_d)$, and (ii) $\nmm{\xh - x_d}_\infty \leq \d$.
\end{proof}

Finally, it is easy to combine the two proof techniques and to establish
the following theorem for the case where $x$ is not exactly sparse and
the measurements are noisy.

\begin{theorem}\label{thm:non-sparse-noisy}
Suppose $x \in \R^n$ and that $\s_k(x,\nmm{\cdot}_1) \leq \d$.
Write $x = x_d + x_r$ where $x_d$ is the dominant part of $x$ consisting
of its $k$ largest components, and $x_r = x - x_d$ is the residual.
Let $y = Ax + \eta$ where $\nmm{\eta}_0 \leq M$,
and $A$ satisfies the main assumption with $q > 2k(r-1)+ 2M$.
Apply the modified recovery algorithm.
Then (i) $\supp(\xh) = \supp(x_d)$, and (ii) $\nmm{\xh - x_d}_\infty \leq \d$.
\end{theorem}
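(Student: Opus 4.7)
The plan is to combine the two previous proof techniques: decompose the measurement error into a contribution coming from the tail $x_r$ of the nearly sparse signal and a contribution from the additive noise $\eta$, and treat their combined effect on every reduced measurement vector $\bar{y}_j$. Specifically, I would write $y = A x_d + A x_r + \eta$ and let $u = A x_d$ be the ``ideal'' measurement of the $k$-sparse dominant part. The key preliminary observation, already used in the proof of Theorem~\ref{thm:non-sparse}, is that because $A$ is a binary matrix its induced norm $\nmm{A}_{1 \to \infty}$ equals $1$, so $\nmm{A x_r}_\infty \leq \nmm{x_r}_1 \leq \d$. Hence every component of the perturbation $\overline{(A x_r)}_j$ has magnitude at most $\d$, while $\bar\eta_j$ has at most $M$ nonzero components (inherited from $\nmm{\eta}_0 \leq M$).

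I would then split into the two cases exactly as in Theorem~\ref{thm:main-1}. When $j \notin \supp(x_d)$, item (1) of Theorem~\ref{thm:main-1} gives at most $k(r-1)$ nonzero entries in $\bar u_j$, the noise contributes at most $M$ further nonzero positions, and the remaining entries of $\bar y_j$ are bounded by $\d$ in magnitude (coming purely from $\overline{(Ax_r)}_j$). Thus at most $k(r-1) + M < q/2$ entries of $\bar y_j$ exceed $\d$ in magnitude, and the modified algorithm correctly sets $\xh_j = 0$. When $j \in \supp(x_d)$, item (2) of Theorem~\ref{thm:main-1} gives at least $q - (k-1)(r-1)$ entries of $\bar u_j$ equal to $(x_d)_j$; subtracting at most $M$ entries corrupted by $\bar\eta_j$ and shifting each by at most $\d$ via $\overline{(Ax_r)}_j$ leaves at least $q - (k-1)(r-1) - M$ entries of $\bar y_j$ within the interval $[(x_d)_j - \d,\, (x_d)_j + \d]$. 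The hypothesis $q > 2k(r-1) + 2M$ guarantees this count exceeds $q/2$, so the algorithm will indeed locate a group of $q/2$ elements whose spread is at most $2\d$.

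The step I expect to be the main obstacle is establishing that the cluster selected by the modified algorithm is the \emph{correct} one, so that $|\xh_j - (x_d)_j| \leq \d$ and no spurious value is returned. The argument parallels the uniqueness reasoning at the end of the proof of Theorem~\ref{thm:non-sparse}: the number of entries of $\bar y_j$ lying outside $[(x_d)_j - \d, (x_d)_j + \d]$ is at most $(k-1)(r-1) + M < q/2$, so any group of $q/2$ entries with spread $\leq 2\d$ must contain at least one entry from the main cluster around $(x_d)_j$, which pins the whole group inside a small window around $(x_d)_j$ and forces $\xh_j$ to lie within $\d$ of $(x_d)_j$.

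Putting the two cases together yields $\supp(\xh) = \supp(x_d)$ and $\nmm{\xh - x_d}_\infty \leq \d$. No new machinery is required beyond the main assumption on $A$, Theorem~\ref{thm:main-1}, and the binary-matrix norm bound; the proof is essentially a careful bookkeeping argument that merges the noise accounting from Theorem~\ref{thm:noisy} with the thresholding analysis from Theorem~\ref{thm:non-sparse}.
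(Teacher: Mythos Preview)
Your proposal is correct and follows exactly the approach the paper intends: the paper does not actually write out a proof of this theorem but simply remarks that ``it is easy to combine the two proof techniques'' of Theorems~\ref{thm:noisy} and~\ref{thm:non-sparse}. Your decomposition $y = Ax_d + Ax_r + \eta$, the bound $\nmm{Ax_r}_\infty \leq \d$ from the binary-matrix norm estimate, the sparsity bound $\nmm{\bar\eta_j}_0 \leq M$, and the resulting counts $k(r-1)+M$ and $q-(k-1)(r-1)-M$ are precisely that combination, carried out in full.
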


\subsection{Construction of a Binary Measurement Matrix}

The results presented until now show that the key to the procedure is
the construction of a binary matrix $A$ that satisfies the main assumption.
In this subsection, it is shown that
previous work by DeVore \cite{DeVore07} provides a simple
recipe for constructing a binary matrix with the desired properties.
Note that \cite{DeVore07} was the first paper to propose a completely
deterministic procedure for constructing a matrix that satisfies the
restricted isometry property.
It is shown in this section that DeVore's matrix is also a special
case of the bi-adjacency matrix of an expander graph.
Therefore the DeVore matrix acts as a bridge between two distinct
compressive sensing algorithms.

We now describe the construction in \cite{DeVore07}.
Suppose $q$ is a prime number or a power of a prime number, and
let $\Fbb_q$ denote the finite field with $q$ elements.
%
Suppose $a$ is a polynomial of degree $r-1$ or less with coefficients in
$\Fbb_q$, and define its ``graph'' as the set of all pairs $(x,a(x))$ as $x$
varies over $\Fbb_q$.
Now construct a vector $u_a \in \bi^{q^2 \times 1}$ by setting the entry
in row $(i,j)$ to $1$ if $j = a(i)$, and to zero otherwise.
To illustrate, suppose $q = 3$, so that $\Fbb_q = \{ 0 , 1 , 2 \}$
with arithmetic modulo $3$.
Let $r = 4$, and let $a(x) = 1 + 2x + x^2 + x^3$.
With this choice, we have that $a(0) = 1$, $a(1) = 2$, and $a(2) = 2$.
The corresponding $9 \times 1$ column vector
has $1$'s in positions $(0,1),(1,2),(2,2)$ and zeros elsewhere.
This construction results in a $q^2 \times 1$ column vector $u_a$
that consists of $q$ blocks of size $q \times 1$, each of which
contains a single $1$ and $q-1$ zeros.
Therefore $u_a$ contains $q$ elements of $1$ and the rest equal to zero.

Now let $\Pi_{r-1}( \Fbb_q )$ denote the set of all polynomials of degree $r-1$
or less with coefficients in $\Fbb_q$.
In other words,
\bd
\Pi_{r-1} ( \Fbb_q ) := \left\{ a(x) = \sum_{i=0}^{r-1}
a_i x^i , a_i \in \Fbb_q
\right\} .
\ed
Note that $\Pi_{r-1} ( \Fbb_q )$ contains
precisely $q^r$ polynomials, because each of the $r$
coefficients can assume $q$ different values.\footnote{If the leading
coefficient of a polynomial is zero, then the degree would be less than $r$.}
Now define
\be\label{eq:23}
A := [ u_a , a \in \Pi_{r-1} ( \Fbb_q ) ] \in \bi^{q^2 \times q^r} .
\ee

The following theorem from \cite{DeVore07}
shows that the matrix $A$ constructed as above
satisfies the main assumption, and also the RIP with appropriately
chosen constants.

\begin{theorem}\label{thm:DeVore}
(See \cite[Theorem 3.1]{DeVore07})
For the matrix $A \in \bi^{q^2 \times q^r}$ defined in \eqref{eq:23},
we have that
\be\label{eq:24}
\IP{u_a}{u_b} \leq r-1
\ee
whenever $a,b$ are distinct polynomials in $\Pi_{r-1} ( \Fbb_q )$.
Consequently, if we define the column-normalized matrix
$A' = (1/\sqrt{q}) A$, then $A$
	satisfies the RIP of order $k$ with constant $\d_k \leq ((k-1)(r-1))/q$.
\end{theorem}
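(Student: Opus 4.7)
The plan is to handle the two claims in sequence: the inner product bound is essentially a combinatorial/algebraic fact about polynomials over $\Fbb_q$, and the RIP bound then follows from it by a standard Gershgorin-type argument on the Gram matrix of the normalized columns.

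For the first claim, I would interpret $\IP{u_a}{u_b}$ combinatorially. By construction, $u_a$ has a $1$ exactly in the positions indexed by $(i,a(i))$ as $i$ ranges over $\Fbb_q$, and similarly for $u_b$. Hence
\[
\IP{u_a}{u_b} \;=\; \#\{ i \in \Fbb_q : a(i) = b(i) \} \;=\; \#\{ i \in \Fbb_q : (a-b)(i) = 0 \}.
\]
Since $a \neq b$, the polynomial $a-b$ is nonzero and has degree at most $r-1$, so it has at most $r-1$ roots in $\Fbb_q$. This gives $\IP{u_a}{u_b} \leq r-1$.

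For the second claim, I would work with the Gram matrix $G := (A')^\top A'$ where $A' = (1/\sqrt q)\,A$. Each column $u_a$ has exactly $q$ ones, so $\nm u_a\nm_2^2 = q$, which means every column of $A'$ has unit norm, so $G_{aa} = 1$. For distinct polynomials $a,b$, part (1) gives $|G_{ab}| = \IP{u_a}{u_b}/q \leq (r-1)/q$. Given any $k$-sparse $v \in \R^{q^r}$ with support $S$ of size $k$, one has
\[
\nmeusq{A' v} \;=\; v^\top G \, v \;=\; \nmeusq{v} \;+\; v_S^\top (G_S - I)\, v_S,
\]
where $G_S$ is the principal submatrix of $G$ indexed by $S$. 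The matrix $G_S - I$ has zero diagonal and off-diagonal entries of magnitude at most $(r-1)/q$, so by Gershgorin its spectral radius is bounded by $(k-1)(r-1)/q$. Therefore
\[
\Bigl(1 - \tfrac{(k-1)(r-1)}{q}\Bigr)\nmeusq{v} \;\leq\; \nmeusq{A' v} \;\leq\; \Bigl(1 + \tfrac{(k-1)(r-1)}{q}\Bigr)\nmeusq{v},
\]
which is exactly the RIP of order $k$ with constant $\d_k \leq (k-1)(r-1)/q$.

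Neither step is a genuine obstacle: part (1) is a one-line consequence of the fundamental theorem of algebra applied over $\Fbb_q$, and part (2) is the standard ``low coherence implies RIP'' argument specialized to a matrix all of whose columns have the same norm. The only mildly subtle point is being careful that the bound $(r-1)/q$ on the normalized inner products applies uniformly, and that Gershgorin is applied to the symmetric matrix $G_S - I$ so that its eigenvalues (not just the spectral radius of an arbitrary submatrix) are controlled by the row sums; both are routine.
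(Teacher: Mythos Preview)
Your argument is correct. Note, however, that the paper does not supply its own proof of this theorem: it is quoted from \cite[Theorem~3.1]{DeVore07} and stated without proof. Your write-up is essentially the standard DeVore argument --- the counting of common ``graph'' points via roots of $a-b$ for the coherence bound, followed by the Gershgorin estimate on $G_S - I$ to pass from coherence to RIP --- so there is nothing to contrast methodologically. One cosmetic remark: the theorem as printed says ``$A$ satisfies the RIP,'' which is a typo for $A' = (1/\sqrt q)A$; you handled this correctly by working with the normalized matrix throughout.
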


In Theorem \ref{thm:rand-exp} it is shown that randomly generated
left-regular graphs are expanders, but this result is not particularly useful.
It is therefore of interest to have available methods that are
\textit{guaranteed} to generate expander graphs, even if the number of output
vertices is larger than with random constructions.
One such procedure is given in \cite{Guruswami-et-al09}.
We now describe this construction, and then show that the DeVore 
construction is a special case of it.
The construction in \cite{Guruswami-et-al09} is as follows:
Let $h \geq 2$ be any integer.
Then the map $\GA: \Fbb_q^r \times \Fbb_q \ap \Fbb^{s+1}$ is defined as
\be\label{eq:622}
\GA(f,y) := [ y, f(y) , f^h(y) , f^{h^2}(y) , \ldots , f^{h^{s-1}}(y) ] .
\ee
An alternate way to express the function $\GA$ is:
\bd
\GA(f,y) = [ y, ( f^{h^i}(y) , i = 0 , \ldots , s-1 ) ] .
\ed
In the definition of the function $\GA$,
$y$ ranges over $\Fbb_q$ as the ``counter,'' and the above graph
is left-regular with degree $q$.
The set of input vertices is $\Fbb_q^r$, consisting of polynomials in some
indeterminate $Y$ with coefficients in $\Fbb_q$ of degree no larger than $r-1$.
The set of input vertices has cardinality $q^r$.
The set of output vertices is $\Fbb^{s+1}$ and each output vertex is
an $(s+1)$-tuple consisting of elements from $\Fbb_q$.
The set of output vertices has cardinality $q^{s+1}$.
Note that the graph is $q$-left regular in that every input vertex has
exactly $q$ outgoing edges.

\begin{theorem}\label{thm:Guru-et-al}
(See \cite[Theorem 3.3]{Guruswami-et-al09}.)
For every pair of integers $h,s$, the bipartite graph defined in \eqref{eq:622}
is a $(h^s,1 - \beta)$-expander with
\be\label{eq:623}
\beta = \frac{ (r-1)(h-1)s }{q} 
\ee
whenever
\bd
h < \frac{q}{s(r-1)} + 1 .
\ed
\end{theorem}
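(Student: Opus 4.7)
The plan is to proceed by the Parvaresh--Vardy style interpolation/list-decoding argument, which is the standard tool for showing expansion of such algebraic bipartite graphs. I would argue by contradiction: assume there exists $S \subseteq \V_I$ with $t := |S| \leq h^s$ for which the expansion bound fails, i.e.\ $|N(S)| < (1-\beta) q t$ with $\beta = (r-1)(h-1)s/q$, and then derive an impossible algebraic constraint on $S$.

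First, using a dimension count, I would construct a nonzero interpolating polynomial
\[
Q(Y, Z_0, Z_1, \ldots, Z_{s-1}) \in \Fbb_q[Y, Z_0, \ldots, Z_{s-1}]
\]
with $\deg_Y Q \leq a$ and $\deg_{Z_i} Q \leq h-1$ for each $i$, such that $Q$ vanishes on every vertex of $N(S)$. Such a $Q$ exists whenever the number of coefficients $(a+1)h^s$ exceeds $|N(S)|$; the failure hypothesis $|N(S)| < (1-\beta)qt$ guarantees this for an appropriate $a$. Next, for each input polynomial $f \in S$, since $\GA(f,y) \in N(S)$ for every $y \in \Fbb_q$, the univariate polynomial
\[
p_f(Y) := Q\bigl(Y, f(Y), f^h(Y), f^{h^2}(Y), \ldots, f^{h^{s-1}}(Y)\bigr)
\]
vanishes at all $q$ points of $\Fbb_q$. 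Since $\deg f^{h^i} \leq (r-1)h^i$, the degree of $p_f$ is at most $a + (h-1)(r-1)(1 + h + \cdots + h^{s-1}) = a + (r-1)(h^s-1)$, and I would choose $a$ so that this bound is less than $q$, forcing $p_f \equiv 0$.

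Next, I would introduce the two-variable polynomial
\[
P(Y, Z) := Q(Y, Z, Z^h, Z^{h^2}, \ldots, Z^{h^{s-1}}) \in \Fbb_q[Y, Z].
\]
A key observation is that the exponents $e_0 + e_1 h + e_2 h^2 + \cdots + e_{s-1} h^{s-1}$, as $(e_0, \ldots, e_{s-1})$ ranges over $\{0,1,\ldots,h-1\}^s$, are all distinct (this is the base-$h$ digit representation). Consequently, distinct monomials of $Q$ give rise to distinct monomials of $P$, so $P \neq 0$, and $\deg_Z P \leq (h-1)(1 + h + \cdots + h^{s-1}) = h^s - 1$. Since $p_f(Y) = P(Y, f(Y)) \equiv 0$, the element $f(Y) \in \Fbb_q(Y)$ is a root of $P(Y, Z)$ viewed as a nonzero polynomial in $Z$ over the field $\Fbb_q(Y)$. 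Hence the number of distinct $f \in S$ is at most $h^s - 1$, contradicting $|S| \leq h^s$ in the extremal case, and giving the desired contradiction after balancing the parameters.

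The main technical obstacle is the balancing act in Step 1 and Step 2: $a$ must simultaneously be large enough for a nonzero $Q$ to exist (interpolation feasibility), and small enough that the induced univariate polynomial $p_f$ has degree less than $q$ (forcing identical vanishing). The feasibility condition $(a+1)h^s > (1-\beta)qt$ combined with $a + (r-1)(h^s-1) < q$ is compatible precisely when $\beta \geq (r-1)(h^s-1)/(q \cdot h^s) \cdot (\text{correction})$; unwinding these inequalities, and using the additional slack provided by the hypothesis $h < q/(s(r-1)) + 1$, yields exactly the claimed value $\beta = (r-1)(h-1)s/q$. Handling smaller $|S| < h^s$ either requires adjusting the interpolation degrees $a$ and $b$ as functions of $t$, or observing that if expansion fails on a subset, one can pad to a set of size $h^s$ that also witnesses failure, reducing to the extremal case.
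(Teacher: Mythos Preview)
The paper itself does not prove this theorem; it simply quotes \cite[Theorem 3.3]{Guruswami-et-al09} and adds only the remark that the hypothesis $h < q/(s(r-1)) + 1$ guarantees $\beta < 1$. Your outline is the Parvaresh--Vardy list-decoding argument that Guruswami--Umans--Vadhan themselves use, so at the level of strategy you are reproducing the original source rather than comparing against anything written in this paper.

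There is, however, a concrete gap in your execution. The notation $f^{h^i}(y)$ in \eqref{eq:622} is an abbreviation: in the Guruswami--Umans--Vadhan construction one fixes an irreducible $E(Y)\in\Fbb_q[Y]$ of degree $r$ and writes $f^{h^i}$ for $f(Y)^{h^i}\bmod E(Y)$, so that $\deg_Y\bigl(f^{h^i}\bmod E\bigr)\leq r-1$ for every $i$. You instead treated $f^{h^i}$ as the literal power $f(Y)^{h^i}$, obtaining $\deg f^{h^i}\leq (r-1)h^i$ and hence $\deg p_f\leq a+(r-1)(h^s-1)$. Balancing this against the interpolation constraint only yields $\beta\geq (r-1)(h^s-1)/q$, which for $s\geq 2$ is strictly larger than the stated $\beta=(r-1)(h-1)s/q$, since $h^s-1=(h-1)(1+h+\cdots+h^{s-1})>(h-1)s$ whenever $h\geq 2$. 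With the correct bound $\deg p_f\leq a+s(h-1)(r-1)$, the balancing gives precisely the claimed $\beta$; the hypothesis $h<q/(s(r-1))+1$ provides no ``slack'' here---it is exactly the condition $\beta<1$.

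Two related points. First, once the modular reduction is in place, your root-counting over $\Fbb_q(Y)$ no longer works: $P(Y,f(Y))=Q(Y,f(Y),f(Y)^h,\ldots)$ is \emph{not} the same as $p_f(Y)=Q(Y,f(Y),(f^h\bmod E)(Y),\ldots)$. The correct argument passes to $\Fbb_{q^r}=\Fbb_q[Y]/(E)$, where the identity $(f\bmod E)^{h^i}=f^{h^i}\bmod E$ lets one form $R(Z)=Q^*(Z,Z^h,\ldots,Z^{h^{s-1}})$ as a nonzero polynomial over $\Fbb_{q^r}$ of degree at most $h^s-1$ and bound $|S|$ by its number of roots. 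Second, the padding suggestion for $|S|<h^s$ is not valid: enlarging $S$ can only add neighbours, so a superset need not witness failure of expansion. The original proof handles all $|S|\leq h^s$ by fixing the interpolation degrees once (independently of $|S|$) and using the list-size bound directly; you should consult \cite{Guruswami-et-al09} for the details rather than rely on padding.
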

Note that the inequality simply ensures that $\beta > 0$.

Now we relate the construction of DeVore with that in \cite{Guruswami-et-al09}.

\begin{theorem}\label{thm:our}
The matrix $A$ constructed in \cite{DeVore07} is a special case of the graph in
Theorem \ref{thm:Guru-et-al} with $s = 1$, and any value for $h$.
Therefore a bipartite graph with the biadjacency matrix of \cite{DeVore07} 
is a $(h,1 - \beta)$-expander with
\be\label{eq:623a}
\beta = \frac{ (r-1)(h-1) }{q} 
\ee
whenever
\bd
h < \frac{q}{r-1} + 1 .
\ed
\end{theorem}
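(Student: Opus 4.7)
My plan is to prove Theorem \ref{thm:our} by explicitly matching the DeVore construction with the Guruswami--Umans--Vadhan construction at $s=1$, and then mechanically specializing Theorem \ref{thm:Guru-et-al}.

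First, I would set $s=1$ in the definition of $\GA$ in \eqref{eq:622}. The formula collapses to $\GA(f,y) = [y, f(y)]$, so the set of output vertices becomes $\Fbb_q^{s+1} = \Fbb_q^2$, which has cardinality $q^2$ (matching the number of rows of the DeVore matrix). The set of input vertices is $\Fbb_q^r = \Pi_{r-1}(\Fbb_q)$ with cardinality $q^r$ (matching the number of columns). Each input polynomial $f$ has exactly $q$ outgoing edges, one for each $y \in \Fbb_q$, which matches the fact that each column of the DeVore matrix has exactly $q$ ones.

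Next I would verify that the edge sets coincide. Unpacking the DeVore construction in \eqref{eq:23}: for polynomial $a \in \Pi_{r-1}(\Fbb_q)$, the $1$ entries of column $u_a$ appear in exactly those rows indexed by pairs $(i,j) \in \Fbb_q \times \Fbb_q$ with $j = a(i)$. If we identify output vertices with pairs $(y, z) \in \Fbb_q^2$, then the neighbors of $a$ in DeVore's graph are precisely $\{(y, a(y)) : y \in \Fbb_q\}$. On the other side, $\GA(a, y) = [y, a(y)]$ for $y$ ranging over $\Fbb_q$ yields exactly the same set. So the bipartite graph defined by DeVore's bi-adjacency matrix is literally the $s=1$ case of the Guruswami--Umans--Vadhan graph.

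Having identified the two graphs, I would simply substitute $s=1$ into the conclusion of Theorem \ref{thm:Guru-et-al}: the expansion parameter becomes $K = h^s = h$, the loss factor becomes $\beta = (r-1)(h-1)s/q = (r-1)(h-1)/q$, matching \eqref{eq:623a}, and the admissibility condition $h < q/(s(r-1)) + 1$ becomes $h < q/(r-1) + 1$. This proves the claim.

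The main obstacle is essentially notational rather than mathematical: I need to be careful about how the output vertices of $\GA$ (which are tuples in $\Fbb_q^{s+1}$) are indexed against the rows of the DeVore matrix (which in \eqref{eq:23} are indexed by pairs $(i,j)$ from the ``block'' description). Once the bijection $(i,j) \leftrightarrow (y,z)$ between these indexings is made explicit, the identification is direct and the rest is a one-line specialization of Theorem \ref{thm:Guru-et-al}.
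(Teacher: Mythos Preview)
Your proposal is correct and follows essentially the same approach as the paper: set $s=1$ in \eqref{eq:622} so that $\GA(f,y)=[y,f(y)]$ is exactly the ``graph'' of $f$ used in the DeVore construction, and then specialize Theorem~\ref{thm:Guru-et-al}. Your write-up is simply more explicit about matching cardinalities and edge sets, and about why $h$ drops out of $\GA$ when $s=1$; the paper's own proof is the two-sentence version of the same observation.
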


\begin{proof}
Suppose that $s = 1$ and that $h$ is any integer.
In this case each polynomial $f$ with coefficients in $\Fbb_q$
of degree $r-1$ or less gets mapped into the pair $(y,f(y))$ as $y$
ranges over $\Fbb_q$.
This is precisely what was called the ``graph'' of the polynomial $f$
in \cite{DeVore07}.
\end{proof}

\section{Computational Results}

Theorems \ref{thm:DeVore} and \ref{thm:our} show that the measurement
matrix construction proposed in \cite{DeVore07}
falls within the
ambit of both the restricted isometry property as well as expander graphs.
In other words, the binary DeVore's measurement matrix satisfies both RIP-2
and RIP-1 due to its construction and expander graph nature, respectively.
Hence this matrix can be used together with $\ell_1$-norm minimization,
the expander graph algorithm of Xu-Hassibi, as well as our
proposed algorithm.
In this section we compare the performance of all three algorithms
using the DeVore construction.
Note however that the number of rows of the matrix (or equivalently,
the number of measurements) will vary from one method to another.
This is discussed next.


\subsection{Number of Measurements Required by Various Methods}

In this subsection we compare the number of measurements required
by $\ell_1$-norm minimization, expander graphs, and our method.

In $\ell_1$-norm minimization, as shown in Theorem \ref{thm:DeVore},
the matrix $A$, after column normalization dividing each column by 
$\sqrt{q}$, satisfies the RIP with constant $\d_k = (k-1)(r-1)/q$.
Combined with Theorem \ref{thm:Cai-Zhang-1}, we conclude that
$\ell_1$-norm minimization with the DeVore construction achieves
robust $k$-sparse recovery whenever
\be\label{eq:41}
\frac{ ( \lceil tk \rceil -1 )(r-1) }{q} < \sqrt{ \frac{t-1}{t} } .
\ee
To maximize the value of $k$ for which the above inequality holds,
we set $r$ to its minimum permissible value, which is $r = 3$.
Also, we replace $ \lceil tk \rceil -1$ by its upper bound $tk$,
which leads to
\bd
\frac{2tk}{q} < \sqrt{ \frac{t-1}{t} } , \mbox{ or }
\frac{2k}{q} < \sqrt{ \frac { t-1 }{ t^3 } }.
\ed
Elementary calculus shows that the right side is maximized when
$t = 1.5$.
So the RIP constant of the measurement matrix must satisfy 
\bd
\d_{tk} < 
\sqrt{(t-1)/t} = 1/\sqrt{3} \approx 0.577 .
\ed
Let us choose a value of $0.5$ for $\d_{tk}$ to give some ``cushion.''
Substituting the values $t = 1.5, r = 3$ in \eqref{eq:41} and ignoring
the rounding operations finally leads to the condition
\be\label{eq:42}
\frac{3k}{q} < 0.5, \mbox{ or } q > 6k .
\ee

For expander graphs, we can calculate the expansion factor $1-\beta$
from Theorem \ref{thm:Guru-et-al}.
This gives
\bd
\beta = \frac{ (r-1) (h-1) s} { q } .
\ed
Since we wish the expansion factor $1 - \beta$
to be as close to one as possible,
or equivalently, $\beta$ to be as small as possible, we choose $s$
to be its minimum value, namely $s = 1$.
Now we substitute $r = 3$, $h = 2k$ (following \cite{Xu-Hassibi07}),
and set $1 - \e \geq 3/4$, or equivalently $\e \leq 1/4$.
This leads to
\bd
\frac{ 2 (2k-1) }{ q } \leq 1/4 , \mbox{ or } q \geq 8 (2k - 1) .
\ed
Finally, for the new algorithm, it has already been shown that
$q \geq 2(r-1)k = 4k$.
Note that, since the matrix $A$ has $q^3$ columns, we must also have that
$n \leq q^3$.

The required number of measurements for each of the
three algorithms are as shown in Table
\ref{table:meas}.
To facilitate the presentation, we introduce the notation
$\lceil x \rceil_p$ to denote the \textbf{smallest prime number}
that is no smaller than $x$.

\begin{table}
\bc
\btab{|c|c|c|}
\hline
Method & $q$ & $m$ \\
\hline
$\ell_1$-norm min. & $\lceil \max\{6k , n^{2/3} \} \rceil_p$ & $q^2$ \\
Expander Graph & $\lceil \max\{8(2k-1) , n^{2/3} \} \rceil_p$ & $q^2$ \\
New Algorithm & $\lceil \max\{4k , n^{2/3} \} \rceil_p$ & $q^2$ \\
\hline
\etab
\ec
\caption{Number of Measurements for Various Approaches}
\label{table:meas}
\end{table}

\subsection{Computational Complexity}

Because the new algorithm does not involve any iteration, it is very fast.
In this subsection we analyze the number of arithmetic operations involved
in implementing it.
For each index $j \in [n]$, there are in essence two steps: First,
to determine whether $j$ belongs to the support of the unknown vector,
and second, if $j$ does belong to $\supp(x)$, to determine the value of $x_j$.
This is achieved as follows:
For each index $j \in [n]$, the reduced vector $\yb_j$ is computed;
then $\yb$ is sorted in decreasing order of magnitude.
If $\yb_{(q+1)/2} = 0$, then $j \not \in \supp(x)$.
If $\yb_{(q+1)/2} \neq 0$, then the sorted vector is scanned over a window
of width $(q+1)/2$, and an index $a$ is chosen such that
$\yb_a = \yb_{a+(q-1)/2}$.
This is the value of $x_j$.
Thus, for each index $j$, the most time-consuming step is to sort $\yb$.
Since $\yb$ has $q$ components, the complexity is $O(q \log q)$, and since
$q = O(k)$, the complexity is $O(k \log k)$.
Since this has to be done $n$ times, the overall complexity is 
$O(nk \log k)$.
Note that the algorithm is fully parallelizable, in that each index $j$
can be processed separately and independently of the rest.

\subsection{Numerical Examples}

In this section we present a numerical example to compare the three methods.
We chose $n = 20,000$ to be the dimension of the unknown vector $x$.
Since all three methods produce a measurement matrix with $m = q^2$ rows,
we must have $q < 141 \approx \sqrt{20000}$,
because otherwise the number of measurements
would exceed the dimension of the vector!
Since the expander graph method requires the most measurements, the
sparsity count $k$ must satisfy $8(2k-1) < 141$, which gives $k \leq 9$.
However, if we try to recover $k$-sparse vectors with $k = 9$ using
the expander graph method, the number of measurements $m$ would be essentially
equal to the dimension of the vector $n$.
Hence we chose value of $k = 6$.
With this choice, the values of $q$ and the number of measurements are
shown in Table \ref{table:meas-2}.
Note that $q$ must be chosen as a prime number.

\begin{table}
\bc
\btab{|c|c|c|}
\hline
Method & $q$ & $m$ \\
\hline
$\ell_1$-norm min. & 37 & 1,369 \\
Expander graph & 89 & 7,921 \\
New algorithm & 29 & 841 \\
\hline
\etab
\ec
\caption{Number of measurements required for the numerical
examples with $n = 20,000$ and $k = 6$.}
\label{table:meas-2}
\end{table}

Having chosen the values of $n$ and $k$, we generated 100 different
$k$-sparse $n$-dimensional vectors, with both the support set of size $k$
and the nonzero values of $x$ generated at random.\footnote{Matlab codes
are available from the authors.}
As expected, both the expander graph method and the new algorithm recovered
the unknown vector $x$ \textit{exactly} in all 100 cases.
The $\ell_1$-norm minimization method recovers $x$ with very small error.
However, there was a substantial variation in the average time over the
100 runs.
Our algorithm took an average of 0.0951 seconds, or about 95 milliseconds,
$\ell_1$-norm minimization took 21.09 seconds, and the expander-graph
algorithm took 76.75 seconds.
Thus our algorithm was about 200 times faster than $\ell_1$-norm
minimization and about 800 times faster than the expander-graph algorithm.

As a final example, we introduced measurement noise into the output.
As per Theorem \ref{thm:noisy}, if $y = Ax + \eta$ where $\nmm{\eta}_0 
\leq M$, then it is still possible to recover $x$ exactly by increasing
the prime number $q$.
(Note that it is also possible to retain the same value of $q$ by
reducing the sparsity count $k$ so that $k+M$ is the same as before.)
Note that the only thing that matters here is the number of nonzero
components of the noise $\eta$, and not their magnitudes.
One would expect that, if the norm of the noise gets larger and larger,
our algorithm would continue to recover the unknown sparse vector exactly,
while $\ell_1$-norm minimization would not be able to.
In other words, our algorithm is tolerant to ``shot'' noise whereas
$\ell_1$-norm minimization is not.
The computational results bear this out.
We choose $n = 20,000$ and $k = 6$ as before, and $M = 6$, so that
we perturb the true measurement $Ax$ in six locations.
Specifically we chose $\eta = \al v$ where each component of $v$ is
normally distributed, and then increased the scale factor $\al$.
Each experiment was repeated with 100 randomly generated sparse vectors
and shot noise.
The results are shown in Table \ref{table:res-big}.

\begin{table}
\bc
\begin{tabular}{|c||c|c||c|c|}
\hline
\textbf{} & \multicolumn{2}{c|}{\textbf{New Algorithm}} & \multicolumn{2}{|c|}{\textbf{$\ell_1$-norm minimization}} \\ \hline
\textbf{Alpha}         & \textbf{Err.} & \textbf{Time} & \textbf{Err.} & \textbf{Time} \\ \hline
$10^{-5}$  & 0             & 0.1335                                            & 3.2887e-06    & 26.8822                                           \\ \hline
$10^{-4}$  & 0             & 0.1325                                             & 3.2975e-05    & 26.6398                                             \\ \hline
$10^{-3}$  & 0             & 0.1336                                            & 3.3641e-04    & 28.1876                                           \\ \hline
$10^{-2}$  & 0             & 0.1357                                        & 0.0033        & 23.1727                                           \\ \hline
$10^{-1}$ & 0             & 0.1571                                          & 0.033         & 28.9145                                          \\ \hline
10                    & 0             & 0.1409                                            & 1.3742        & 26.6362                                              \\ \hline
20                    & 0             & 0.1494                                        & 1.3967        & 26.5336                                         \\ \hline
\end{tabular}
\ec
\caption{Performance of new algorithm and $\ell_1$-norm minimization
with additive shot noise}
\label{table:res-big}
\end{table}

\section{Discussion and Conclusions}

In this paper we have presented a new algorithm for compressive sensing that
makes use of \textit{binary} measurement matrices and achieves
\textit{exact} recovery of sparse vectors, 
\textit{without any iterations}.
Exact recovery continues to hold even when the measurements are
corrupted by a noise vector with a sufficiently small support set;
this noise model is reminiscent of the model used in algebraic coding.
When the unknown vector is not exactly sparse, but is nearly sparse
with a sufficiently small residual, our algorithm exactly recovers
the support set of the dominant components, and finds an approximation
for the dominant part of the unknown vector.
Because our algorithm is non-iterative, it
executes orders of magnitude faster than
algorithms based on $\ell_1$-norm minimization 
and methods based on expander graphs (both of 
which require multiple iterations).
Moreover, our method requires a smaller number of measurements in comparison to
these two approaches when the measurement matrix is binary.
On test examples of $k$-sparse $n$-dimensional vectors with
$k = 6$ and $n = 20,000$,
our algorithm executes roughly $1,000$ times faster than
the Xu-Hassibi algorithm \cite{Xu-Hassibi07} based on expander graphs,
and roughly $200$ times faster than $\ell_1$-norm minimization.

On the other hand, these two methods do have their own advantages over
the algorithm proposed here.
The Xu-Hassibi algorithm \cite{Xu-Hassibi07} and its extension in
\cite{JXHC09} can be used with \textit{any} expander graph with an
expansion factor that is sufficiently close to one.
In contrast, our algorithm makes use of \textit{a particular} family
of expander graphs whose bi-adjacency matrix satisfies the ``main
assumption.''
Similarly, if $\ell_1$-norm minimization is used to reconstruct
a vector, then a bound of the form \eqref{eq:14} holds no matter
what the unknown vector is.
In contrast, our error bounds require that the residual part of
the unknown vector must be sufficiently small compared to the
dominant part of the vector.
This might not be a serious drawback however, because the objective
of compressive sensing is to recover nearly sparse vectors, and not
arbitrary vectors.

\section*{Appendix}

In this appendix, we compare the number of measurements used by
probabilistic as well as deterministic methods to guarantee that
the corresponding measurement matrix $A$ satisfies the restricted 
isometry property (RIP), as stated in Theorem \ref{thm:Cai-Zhang-1}.
Note that the number of measurements is computed from the best available
sufficient condition.
In principle it is possible that matrices with fewer rows might also satisfy
the RIP.
But there would not be any theoretical justification for using such matrices.

In probabilistic methods, the number of measurements $m$ is
$O(k \log (n/k))$.
However, in reality the $O$ symbol hides a huge constant.
It is possible to replace the $O$ symbol by carefully collating the
relevant theorems in \cite{FR13}.
This leads to the following explicit bounds.

\begin{theorem}\label{thm:bound-prob}
Suppose $X$ is a random variable with zero mean, unit variance, and suppose
in addition that there exists a constant $c$ such that\footnote{Such
a random variable is said to be \textbf{sub-Gaussian}.
A normal random variable satisfies \eqref{eq:15b} with $c = 1/2$.}
\be\label{eq:15b}
E[ \exp(\th X)] \leq \exp(c \th^2) , \fa \th \in \R .
\ee
Define
\be\label{eq:15c}
\g = 2 , \zeta = 1/(4c) ,
\al = \g e^{- \zeta} + e^\zeta , \beta = \zeta ,
\ee
\be\label{eq:15d}
\ct := \frac { \beta^2 }{ 2 ( 2 \al + \beta ) } .
\ee
Suppose an integer $k$ and real numbers $\d , \xi \in (0,1)$ are specified,
and that $A = (1/\sqrt{m}) \Phi$, where $\Phi \in \R^{m \times n}$
consists of independent samples of $X$.
Then $A$ satisfies the RIP of order $k$ with constant $\d$ with probability
$\geq 1 - \xi$ provided
\be\label{eq:15e}
m \geq \frac{1}{ \ct \d^2 } \left( \frac{4}{3} k \ln \frac{en}{k}
+ \frac{14k}{3} + \frac{4}{3} \ln \frac{2}{\xi} \right) .
\ee
\end{theorem}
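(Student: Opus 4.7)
The plan is to follow the standard concentration-plus-covering argument, tracking constants carefully so that the resulting bound matches the explicit form \eqref{eq:15e}. The starting observation is that, for any fixed $u\in\R^n$ and any row index $i$, the quantity $(\Phi u)_i=\sum_j \Phi_{ij}u_j$ is a linear combination of independent sub-Gaussians with parameter $c$ and hence is itself sub-Gaussian with parameter $c\nmeusq{u}$. Consequently $(\Phi u)_i^2/\nmeusq{u}$ is sub-exponential with mean $1$, and the mgf of its centered version can be controlled by the constants $\alpha,\beta$ appearing in \eqref{eq:15c}. Summing over the $m$ independent rows, a Bernstein-type calculation yields a concentration bound of the form
\bd
\Pr\bigl\{ \bigl| \nmeusq{Au} - \nmeusq{u} \bigr| > \d \nmeusq{u} \bigr\}
\leq 2\exp\bigl(- \ct \d^2 m \bigr)
\ed
for every $\d\in(0,1)$ and every fixed $u$. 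This is where the constant $\ct$ in \eqref{eq:15d} enters, and verifying the particular form of $\ct$ from the sub-Gaussian assumption is the heart of the probabilistic step.

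Next I would extend from a single fixed $u$ to all $k$-sparse vectors by a two-level covering argument. For each subset $S\seq[n]$ with $|S|=k$, fix an $\eta$-net $\N_S$ of the unit sphere of $\R^S$; a standard volume comparison gives $|\N_S|\leq (1+2/\eta)^k$. Applying the single-vector concentration bound to every point of $\N_S$ with constant $\d/2$ (say), and then taking a union bound over $\N_S$ and over all $\binom{n}{k}\leq (en/k)^k$ possible supports, yields a net-level RIP statement failing with probability at most
\bd
2\left(\frac{en}{k}\right)^k \left(1+\frac{2}{\eta}\right)^k
\exp\!\bigl(-\ct (\d/2)^2 m\bigr).
\ed
A short post-processing argument (the standard ``approximation'' lemma for RIP, sometimes attributed to Baraniuk--Davenport--DeVore--Wakin) then bootstraps the net-level bound into the full RIP constant $\d$ on the entire subspace. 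Choosing $\eta$ appropriately, for example $\eta=1/4$, converts $(1+2/\eta)^k$ into the cumulative additive term $14k/3$ that appears in \eqref{eq:15e}.

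Finally, forcing the failure probability to be at most $\xi$ gives
\bd
\ct \d^2 m \;\geq\; \tfrac{4}{3}\Bigl[ k\ln\tfrac{en}{k} + k\ln\!\bigl(1+\tfrac{2}{\eta}\bigr) + \ln\tfrac{2}{\xi} \Bigr],
\ed
after rescaling the exponent by $3/4$ to absorb the change from $\d/2$ back to $\d$ via the bootstrapping lemma. Plugging $\eta=1/4$ turns the middle term into $14k/3$ and produces exactly \eqref{eq:15e}. The main obstacle is bookkeeping rather than conceptual: one must choose the Bernstein parameters $\g,\zeta,\alpha,\beta$ in \eqref{eq:15c}--\eqref{eq:15d} so that the mgf of $(\Phi u)_i^2/\nmeusq{u}-1$ is controlled by $\exp\bigl(\alpha\th^2/(1-\beta|\th|)\bigr)$ (or an equivalent form), and then carry the factor of $\tfrac{3}{4}$ from the covering step cleanly through to the final inequality so that the constants $\tfrac{4}{3}$, $14k/3$, and $\tfrac{4}{3}\ln(2/\xi)$ come out precisely as stated. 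Everything else is a relatively mechanical application of results collected in \cite{FR13}.
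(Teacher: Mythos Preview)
The paper does not actually supply a proof of this theorem: it is stated in the Appendix without proof, accompanied only by the remark that the explicit constants come from ``carefully collating the relevant theorems in \cite{FR13}.'' Your outline---single-vector sub-Gaussian/Bernstein concentration, followed by an $\eta$-net covering of each $k$-dimensional unit sphere, a union bound over the $\binom{n}{k}$ supports, and the Baraniuk--Davenport--DeVore--Wakin bootstrapping step---is exactly the argument developed in \cite{FR13}, so it coincides with what the paper is invoking. One minor arithmetic point: the choice $\eta=1/4$ does not in fact produce the middle term $14k/3$, since $\tfrac{4}{3}\ln(1+2/\eta)=\tfrac{4}{3}\ln 9\approx 2.93$, not $14/3$; matching the stated constant requires $\ln(1+2/\eta)=7/2$, i.e.\ $\eta=2/(e^{7/2}-1)$. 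As you yourself observe, this is purely bookkeeping and does not affect the structure of the argument.
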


In \eqref{eq:15e}, the number of measurements $m$ is indeed $O(k \log (n/k))$.
However for realistic values of $n$ and $k$, the number of measurements
$n$ would be comparable to, or even to exceed,
$n$, which would render ``compressive'' sensing
meaningless.\footnote{In many papers on compressive sensing, especially those
using Gaussian measurement matrices, the number of measurements $m$ is
\textit{not} chosen in accordance with any theory, but simply picked
out of the air.}
For ``pure'' Gaussian variables it is possible to find improved
bounds for $m$ (see
Also, for binary random variables where $X$ equals $\pm 1$ with
equal probability, another set of bounds is available
\cite{Achlioptas03}.
While all of these bounds are $O(k \log (n/k))$,
in practical situations the bounds are not useful.

This suggests that it is worthwhile to study \textit{deterministic}
methods for generating measurement matrices that satisfy the RIP.
There are very few such methods.
Indeed, the authors are aware of only three methods.
The paper \cite{DeVore07} uses a finite field method to construct
a binary matrix, and this method is used in the present paper.
The paper \cite{Xu-Xu15} gives a procedure for choosing rows from a unitary
Fourier matrix such that the resulting matrix satisfies the RIP.
This method leads to the same values for the number of measurements $m$
as that in \cite{DeVore07}.
Constructing partial Fourier matrices is an important part of reconstructing
time-domain sparse signals from a limited number of frequency measurements
(or vice versa).
Therefore the results of \cite{Xu-Xu15} can be used in this situation.
In both of these methods, $m$ equals $q^2$ where $q$ is appropriately
chosen prime number.
Finally, in \cite{AHSC09} a method is given based on chirp matrices.
In this case $m$ equals a prime number $q$.
Note that the partial Fourier matrix and the chirp matrix are complex,
whereas the method in \cite{DeVore07} leads to a binary matrix.
In all three methods, $m = O(n^{1/2})$, which grows faster than
$O(k \log (n/k))$.
However, the constant under this $O$ symbol is quite small.
Therefore for realistic values of $k$ and $n$, the bounds for $m$ from
these methods are much smaller than those derived using probabilistic 
methods.

Table \ref{table:bounds-2} gives the values of $m$ for various
values of $n$ and $k$.
Also, while the chirp matrix has fewer measurements than the binary matrix,
$\ell_1$-norm minimization with the binary matrix runs much faster
than with the chirp matrix, due to the sparsity of the binary matrix.
In view of these numbers, in the present paper we used DeVore's 
construction as the benchmark for the recovery of sparse vectors.

\begin{table}
\bc
\btab{|rr|rrr|rr|}
\hline
$n$ & $k$ & $m_G$ & $m_{SG}$ & $m_A$ & $m_D$ & $m_C$ \\
\hline
  $10^4$   & 5  &  5,333  &    28,973  &  3,492  &  841 &    197 \\
    $10^4$   & 6  &  5,785  &    31,780  &  3,830  &  1,369 &    257 \\
      $10^4$   & 7  &  6,674  &    37,308  &  4,496  &  1,681 &    401 \\
        $10^4$   & 8  &  7,111  &    40,035  &  4,825  &  2,209 &    487 \\
  $10^4$   & 9  &  7,972  &    45,424  &  5,474  &  2,809 &    677 \\
    $10^4$   &  10  &  8,396  &    48,089  &  5,796  &  3,481 &    787 \\
    \hline
     $10^5$   &  10  & 10,025  &    57,260  &  6,901  &  3,481 &    787 \\
      $10^5$   &  12  & 11,620  &    66,988  &  8,073  &  5,041 &   1,163 \\
       $10^5$   &  14  & 13,190  &    76,582  &  9,229  &  6,889 &   1,601 \\
        $10^5$   &  16  & 14,739  &    86,061  & 10,372  &  9,409 &   2,129 \\
 $10^5$   &  18  & 16,268  &    95,441  & 11,502  &  11,449 &   2,707 \\
  $10^5$   &  20  & 17,781  &   104,733  & 12,622  &  16,129 &   3,371 \\
 \hline
 $10^6$   &   5  &   7,009  &    38,756  & 4,671  &  10,201 &   1,009 \\
 $10^6$   &  10  & 11,639  &    66,431  &  8,006  &  10,201 &   1,009 \\
 $10^6$   &  15  & 16,730  &    96,976  & 11,687  &  10,201 &   1,949 \\
 $10^6$   &  20  & 21,069  &   123,076  & 14,832  &  16,129 &   3,371 \\
 $10^6$   &  25  & 25,931  &   152,373  & 18,363  &  22,201 &   5,477 \\
 $10^6$   &  30  & 30,116  &   177,635  & 21,407  &  32,041 &   7,753 \\
 $10^6$   &  50  & 47,527  &   283,042  & 34,110  &  94,249 &  21,911 \\
 $10^6$   &  60  & 55,993  &   334,440  & 40,304  & 128,881 &  31,687 \\
 $10^6$   &  70  & 64,335  &   385,171  & 46,417  & 175,561 &  43,271 \\
 $10^6$   &  80  & 72,573  &   435,331  & 52,462  & 229,441 &  56,659 \\
 $10^6$   &  90  & 80,718  &   484,992  & 58,447  & 292,681 &  71,837 \\
 $10^6$   & 100  & 88,781  &   534,210  & 64,378  & 358,801 &  88,807 \\
 \hline
 \etab
 \ec
 \caption{Best available bounds for the
 number of measurements for various choices of $n$ and $k$
 using both probabilistic and deterministic constructions.
 For probabilistic constructions, the failure probability is $\xi = 10^{-9}$.
 $m_G, m_{SG}, m_A$ denote respectively the bounds on the number of measurements
 using a normal Gaussian, a sub-Gaussian with $c = 1/2$, and
 a bipolar random variable and the bound of Achlioptas.
 For deterministic methods $m_D$ denotes the number of measurements using
 DeVore's construction, while $m_C$ denotes the number of measurements using
 chirp matrices.}
 \label{table:bounds-2}
\end{table}

\bibliographystyle{IEEEtran}

\bibliography{Comp-Sens}

\end{document}